\newcommand{\ie}{\textit{i.e.}}
\newcommand{\eg}{\textit{e.g.}}
\newcommand\T{\rule{0pt}{3.2ex}}
\newcommand\B{\rule[-1.4ex]{0pt}{0pt}}
\newtheorem{proposition}{Proposition}[section]
\begin{document}

\title{Permutationally invariant state reconstruction}

\author{Tobias Moroder$^{1,2}$, Philipp Hyllus$^3$, G{\'e}za T{\'o}th$^{3,4,5}$, Christian Schwemmer$^{6,7}$, Alexander Niggebaum$^{6,7}$, Stefanie Gaile$^8$, Otfried G\"uhne$^{1,2}$ and Harald Weinfurter$^{5,6}$}   

\address{$^1$ Naturwissenschaftlich-Technische Fakult\"at, Universit{\"a}t Siegen, Walter-Flex-Stra{\ss}e~3, D-57068 Siegen, Germany}
\address{$^2$ Institut f\"ur Quantenoptik und Quanteninformation, \"Osterreichische Akademie der Wissenschaften, Technikerstra{\ss}e~21A, A-6020~Innsbruck, Austria} 
\address{$^3$ Department of Theoretical Physics, University of the Basque Country UPV/EHU, P.O.~Box~644, E-48080 Bilbao, Spain}
\address{$^4$ \textsc{Ikerbasque}, Basque Foundation for Science, E-48011 Bilbao, Spain}
\address{$^5$ Wigner Research Centre for Physics, Hungarian Academy of Sciences, P.O. Box 49, H-1525 Budapest, Hungary}
\address{$^6$ Max-Planck-Institut f\"ur Quantenoptik, Hans-Kopfermann-Stra{\ss}e 1, D-85748 Garching, Germany}
\address{$^7$ Fakult\"at f\"ur Physik, Ludwig-Maximilians-Universit\"at, D-80797 M\"unchen, Germany}
\address{$^8$ Technical University of Denmark, Department of Mathematics, Matematiktorvet Building 303 B, 2800 Kgs. Lyngby, Denmark}
\eads{\mailto{moroder@physik.uni-siegen.de}}
\pacs{03.65.Wj, 03.65.-w, 03.67.-a}

\begin{abstract}
Feasible tomography schemes for large particle numbers must possess, besides an appropriate data acquisition protocol, also an efficient way to reconstruct the density operator from the observed finite data set. Since state reconstruction typically requires the solution of a non-linear large-scale optimization problem, this is a major challenge in the design of scalable tomography schemes. Here we present an efficient state reconstruction scheme for permutationally invariant quantum state tomography. It works for all common state-of-the-art reconstruction principles, including, in particular, maximum likelihood and least squares methods, which are the preferred choices in today's experiments. This high efficiency is achieved by greatly reducing the dimensionality of the problem employing a particular representation of permutationally invariant states known from spin coupling combined with convex optimization, which has clear advantages regarding speed, control and accuracy in comparison to commonly employed numerical routines. First prototype implementations easily allow reconstruction of a state of 20 qubits in a few minutes on a standard computer.
\end{abstract}

\section{Introduction}

Full information about the experimental state of a quantum system is naturally highly desirable because it enables one to determine the mean value of each observable and thus also of each other property of the quantum state. Abstractly, such a complete description is given for example by the density operator, a positive semidefinite matrix~$\rho$ with unit trace. Quantum state tomography~\cite{qse_book} refers to the task to determine the density operator for a previously unknown quantum state by means of appropriate measurements. Via the respective outcomes more and more information about the true state generating the data is collected up to the point where this information uniquely specifies the particular state. Quantum state tomography has successfully been applied in many experiments using different physical systems, including trapped ions~\cite{haeffner05a} or photons~\cite{kiesel07a} as prominent instances. 

Unfortunately, tomography comes with a very high price due to the exponential scaling of the number of parameters required to describe composed quantum systems. For an $N$ qubit system the total number of parameters of the associated density operator is $4^N-1$ and any standard tomography protocol is naturally designed to determine all these variables. The most common scheme used in experiments~\cite{james01a} consists of locally measuring in the basis of all different Pauli operators and requires an overall amount of $3^N$ different measurement settings with $2^N$ distinct outcomes each. Other schemes, \eg, using an informationally complete measurement~\cite{renes04a} locally would require just one setting but, nevertheless, the statistics for $4^N$ different outcomes. Hence the important figure of merit to compare different methods is given by the combination of settings and outcomes. 

For such a scaling, the methods rapidly become intractable, already for present state-of-the-art experiments: Recording for example the data of $14$ trapped ions~\cite{monz11a}, currently the record for quantum registers, would require about $150$ days, although $100$ measurement outcomes can be collected for a single setting in only about three seconds. In photonic experiments this scaling is even worse because count rates are typically much lower, \eg, in recent eight-photon experiments~\cite{yao11a,huang11a} a coincidence of single click events occurs only at the order of minutes, hence it would require about seven years to collect an adequate data set. This directly shows that more sophisticated tomography techniques are mandatory.

New tomography protocols equipped with better scaling behaviour exploit the idea that the measurement scheme is explicitly optimized only for particular kinds of states rather than for all possible ones. If the true unknown state lies within this designed target class then full information about the state can be obtained with much less effort, and if the underlying density operator is not a member then a certificate signals that tomography is impossible in this given case. Recent results along this direction include tomography schemes designed for states with a low rank~\cite{gross10a,gross11a,flammia12a}, particularly for important low rank states like matrix product~\cite{cramer10a} or multi-scale entanglement renormalization ansatz states~\cite{landon_cardinal12a}. Other schemes include a tomography scheme based on information criteria~\cite{yin11a} or---the topic of this manuscript---states with permutation invariance~\cite{toth10a}.

However it needs to be stressed that in any real experiment all these tomography schemes must cope with another, purely statistical challenge: Since only a finite number of measurements can be carried out in any experiment one cannot access the true probabilities predicted by quantum mechanics $p_k=\tr(\rho_{\rm true} M_k)$, operator $M_k$ describing the measurements, but merely relative frequencies $f_k$. Though these deviations might be small, the approximation $f_k\approx p_k$ causes severe problems in the actual state reconstruction process, \ie, the task to determine the density operator from the observed data. If one na\"ively uses the frequencies according to Born's rule $f_k=\tr(\hat \rho_{\rm lin} M_k)$ and solves for the unknown operator $\hat \rho_{\rm lin}$, then, apart from possible inconsistencies in the set of linear equations, the reconstructed operator $\hat \rho_{\rm lin} \! \not\geq 0$ is often not a valid density operator anymore because some of its eigenvalues are negative. Hence in such cases this reconstruction called linear inversion delivers an unreasonable answer. It should be kept in mind that inconsistencies can also be due to systematic errors, \eg, if the true measurements are aligned wrongly relative to the respective operator representation~\cite{moroder12a,rosset12a}, but such effects are typically ignored.  

Therefore, statistical state reconstruction relies on other principles than linear inversion. In general, these methods require the solution of a non-linear optimization problem, which is much harder to solve than just a system of linear equations. For large system sizes this becomes, besides the exponential scaling of the number of settings and outcomes, a second major problem, again due to the exponential scaling of the number of parameters of the density operator. In fact for the current tomography record of eight ions in a trap \cite{haeffner05a}, this actual reconstruction took even longer than the experiment itself (one week versus a couple of hours). Hence, feasible quantum state tomography schemes for large systems must, in addition to an efficient measurement procedure, possess also an efficient state reconstruction algorithm, otherwise they are not scalable. 

In this paper we develop a scalable reconstruction algorithm for the proposed permutationally invariant tomography scheme~\cite{toth10a}. It works for common reconstruction principles, including, among others, maximum likelihood and least squares methods. This scheme becomes possible once more by taking advantage of the particular symmetry of this special kind of states, which provides an efficient and operational way to store, characterize and even process those states. This method enables a large dimension reduction in the underlying optimization problem such that it gets into the feasible regime. The final low dimensional optimization is performed via non-linear convex optimization which offers great advantages in contrast to commonly used numerical routines, in particular regarding numerical stability and accuracy. Already a first prototype implementation of this algorithm allows state reconstruction for $20$ qubits in a few minutes on a standard desktop computer.  

The outline of the manuscript is as follows: Section~\ref{sec:background} summarizes the background on permutationally invariant tomography and on statistical state reconstruction. The key method is explained in Sec.~\ref{sec:method} and highlighted via examples in Sec.~\ref{sec:examples}, that are generated by our current implementation. Section~\ref{sec:details} collects all the technical details: the mentioned toolbox, more notes about convex optimization, additional information about the pretest or certificate and the measurement optimization, both addressed for large qubit numbers. Finally, we conclude and provide an outlook on further directions in Sec.~\ref{sec:conclusion}.

\section{Background}\label{sec:background}

\subsection{Permutationally invariant tomography}\label{sec:recap_pitomo}

Permutationally invariant tomography has been introduced as a scalable reconstruction protocol for multi-qubit systems in Ref.~\cite{toth10a}. It is designed for all states of the system that remain invariant under all possible interchanges of its different particles. Abstractly such a permutationally invariant state $\rho_{\rm PI}$ of $N$ qubits can be expressed in the form, 
\begin{equation}
\label{eq:PI-state}
\rho_{\rm PI} = \left[ \:\rho \:\right]_{\rm PI} = \frac{1}{N!} \sum_{p\in S_N} V(p)\: \rho \: V(p)^\dag,
\end{equation}
where $V(p)$ is the unitary operator which permutes the $N$ different subsystems according to the particular permutation $p$ and the summation runs over all possible elements of the permutation group $S_N$. Many important states, like Greenberger-Horne-Zeilinger states or Dicke states fall within this special class. 

As shown in Ref.~\cite{toth10a}, full information of such states can be obtained by using in total ${N+2 \choose N} =(N^2+3N+2)/2$ different local binary measurement settings, while for each setting only the count rates of $(N+1)$ different outcomes need to be registered. This finally leads to a cubic scaling in contrast to the exponential scaling of standard tomography schemes. 

The measurement strategy that attains this number runs as follows: Each setting is described by a unit vector $\hat a \in \mathbbm{R}^3$ which defines associated eigenstates $\ket{i}_a$ of the trace-less operator $\hat a \cdot \vec \sigma$. Each party locally measures in this basis and registers the outcomes ``$0$'' or ``$1$'' respectively. The permutationally invariant part can be reconstructed from the collective outcomes, \ie, only the number of ``$0$'' or ``$1$''  results at the different parties matters but not the individual site information. The corresponding coarse-grained measurements are given by
\begin{eqnarray}
\label{eq:PI-measurements}
M_k^a&=& \sum_{p^\prime} V(p^\prime) \ket{0}_a\!\bra{0}^{\otimes k} \otimes  \ket{1}_a\!\bra{1}^{\otimes N-k} V(p^\prime)^\dag, \\
& = & {N \choose k} \left[ \ket{0}_a\!\bra{0}^{\otimes k} \otimes  \ket{1}_a\!\bra{1}^{\otimes N-k}\right]_{\rm PI}
\end{eqnarray}
with $\: k=0,\dots,N,$ and where the summation $p^\prime$ is over all permutations that give distinct terms. In total one needs the above stated number of different settings $\hat a$. These settings can be optimized in order to minimize the total variance which provides an advantage in contrast to random selection. 

In addition to this measurement strategy there is also a pretest which estimates the ``closeness'' of the true, unknown state with respect to all permutationally invariant states from just a few measurement results~\cite{toth10a}. This provides a way to test in advance whether permutationally invariant tomography is a good method for the unknown state. 

Restricting to the permutationally invariant part of a density operator has already been discussed in the literature; for example for spins in a Stern-Gerlach experiment~\cite{dariano03a} or in terms of the polarization density operator~\cite{karassiov05a,adamson07a,adamson08a}. Here, due to the restricted class of possible measurements, only the permutationally invariant part of, in principle distinguishable, particles is accessible~\cite{adamson07a,adamson08a}. This is a strong conceptual difference to permutationally invariant tomography where one intentionally constrains itself to this invariant part. Nevertheless it should be emphasized that the employed techniques are similar.

\subsection{Statistical state reconstruction}

Since standard linear inversion of the observed data typically results in unreasonable estimates as explained in the introduction, one employs other principles for actual state reconstruction. In general one uses a certain fit function $F(\rho)$ that penalizes deviations between the observed frequencies $f_k$ and the true probabilities predicted by quantum mechanics $p_k(\rho)=\tr(\rho M_k)$ if the state of the system is $\rho$. The reconstructed density operator $\hat \rho$ is then given by the (often unique) state that minimizes this fit function,
\begin{equation}
\label{eq:struct_srecon}
\hat \rho = \arg \:  \min_{\rho \geq 0} F(\rho),
\end{equation} 
hence the reconstructed state is precisely the one which best fits the observed data. Since the optimization explicitly restricts to physical density operators this assures validity of the final estimate $\hat \rho$ in contrast to linear inversion. Depending on the functional form of the fit function different reconstruction principles are distinguished. A list of the most common choices is provided in Tab.~\ref{tab:reconF}.

\begin{table}[ht]
\begin{center}
\begin{tabular}{lc}
\hline \hline \hline
Reconstruction principle \T\B& Fit function $F(\rho)$ \\ 
\hline
\T\B Maximum Likelihood~\cite{hradil97a} & $-\sum_k f_k \log[ p_k(\rho)]$ \\
\T\B Least Squares~\cite{langford07a} & $\sum_k w_k [ f_k - p_k(\rho)]^2$, $\:w_k > 0$ \\
\T\B Free Least Squares~\cite{james01a} & $\sum_k 1/p_k(\rho)[ f_k - p_k(\rho)]^2$ \\
\hline
\T\B Hedged Maximum Likelihood~\cite{blume10a} &$-\sum_k f_k \log[ p_k(\rho)]-\beta \log[\det(\rho)]$, $\:\beta > 0$ \\
\hline \hline \hline
\end{tabular}
\label{tab:reconF}
\caption{Common reconstruction principles and their corresponding fit functions $F$ used in the optimization given by Eq.~(\ref{eq:struct_srecon}); see text for further details.}
\end{center}
\end{table} 

The presumably best-known and most often employed method is called maximum likelihood principle~\cite{hradil97a}. Given a set of measured frequencies $f_k$ the maximum likelihood state $\hat \rho_{\rm ml}$ is exactly the one with the highest probability to generate these data. Other common fit functions, usually employed in photonic state reconstruction, are different variants of least squares~\cite{james01a,langford07a} that originate from the likelihood function using Gaussian approximations for the probabilities. There this is often also called maximum likelihood principle but we  distinguish these, indeed different, functions here explicitly. Typically the weights in the least squares function are set to be $w_k=1/f_k$ because $f_k$ represents an estimate of the variance in a multinomial distribution, cf. the free least squares principle. However, this leads to a strong bias if the counts rates are extremal, \eg, if one of the outcomes is never observed this method naturally leads to difficulties. A method to circumvent this is given by the free least squares function~\cite{james01a} or using improved error analysis for rare events. Let us stress that all these principles have the property that if linear inversion delivers a valid estimate $\hat \rho_{\rm lin} \geq 0$, it is also the estimate given by these reconstruction principles~\footnote{For the least squares fit functions this follows because $F=0$ in this case and clearly $F\geq 0$ for those functions. In the case of the likelihood it follows from positivity of the classical relative entropy between probability distributions.}. 

Finally, hedged maximum likelihood~\cite{blume10a} represents a method that circumvents low rank state estimates. Via this one obtains more reasonable error bars using parametric bootstrapping methods~\cite{bootstrapping}; for other error estimates we refer to the recently introduced confidence regions for quantum states~\cite{christandl11a,blume-kohout12a}. In principle many more fit functions are possible, like generic loss functions~\cite{mood}, but considering these is out of the scope of this work.

\section{Method}\label{sec:method}
From the previous it is apparent that permutationally invariant state reconstruction requires the solution of 
\begin{equation}
\label{eq:PI_staterecon}
\hat \rho_{\rm PI} = \arg \:  \min_{\rho_{\rm PI} \geq 0}  F(\rho_{\rm PI}),
\end{equation}
for the preferred fit function. This large-scale optimization becomes feasible along the following lines:

First, one reduces the dimensionality of the underlying optimization problem because one cannot work with full density operators anymore. This requires an operational way to characterize permutationally invariant states $\rho_{\rm PI} \geq 0$ over which the optimization is performed, and additionally demands an efficient way to compute probabilities $p_k(\rho_{\rm PI})$ which appear in the fit function. Second, one needs a method to perform the final optimization. We employ convex optimization for this task.

\subsection{Reduction of the dimensionality}

This reduction relies on an efficient toolbox to handle permutationally invariant states, which exploits the particular symmetry. Here we explain this method and the final structure; for more details see Sec.~\ref{sec:details_reduction}. These techniques are well-proven and established; we employ and adapt them here for the permutationally invariant tomography scheme such that we finally reach state reconstruction of larger qubits.

The methods of this toolbox are obtained via the concept of spin coupling that describes how individual, distinguishable spins couple to a combined system if they become indistinguishable. Since we deal with qubits we only need to focus on spin-$1/2$ particles. In the simplest case, two spin-$1/2$ particles can couple to a spin-$1$ system, called the triplet, if both spins are aligned symmetric, or to a spin-$0$ state, the singlet, if the spins are aligned anti-symmetric. Abstractly, this can be denoted as $\mathbbm{C}^2 \otimes \mathbbm{C}^2 = \mathbbm{C}^3 \oplus \mathbbm{C}^1$. If one considers now three spins, then of course all spins can point in the same direction giving a total spin-$3/2$ system. There is also to a spin-$1/2$ system possible if two particles form already a spin-$0$ and the remaining one stays untouched. This can be achieved however by more than one possibility, in fact by two inequivalent choices~\footnote{More precisely, all states of the form $\ket{\psi}=V(p)\ket{\psi^-}\otimes \ket{0}$ with $p$ being any possible permutation, are states of total spin $j=1/2$ and projection $m=1/2$ to the collective spin operators $J_i=\sum_{n=1}^3 \sigma_{i;n}/2$, $\sigma_{i;n}$ being the corresponding Pauli operator on qubit  $n$. However as can be checked these states only span a $2$-dimensional subspace.}, and is expressed by $\mathbbm{C}^2 \otimes \mathbbm{C}^2 \otimes \mathbbm{C}^2 = \mathbbm{C}^4 \oplus (\mathbbm{C}^2 \otimes \mathbbm{C}^2)$. 

This scheme can be extended to $N$ spin-$1/2$ particles to obtain the following decomposition of the total Hilbert space,
\begin{equation}
\mathcal{H}=(\mathbbm{C}^2)^{\otimes N} = \bigoplus_{j=j_{\rm min}}^{N/2} \mathcal{H}_j \otimes \mathcal{K}_j.
\end{equation}
where the summation runs over different total spin numbers $j=j_{\rm min},j_{\rm min}+1,\dots, N/2$ starting from $j_{\rm min}\in \{0,1/2\}$ depending on whether $N$ is even or odd. Here, $\mathcal{H}_j$ are called the spin Hilbert spaces with dimensions $\dim(\mathcal{H}_j)=2j+1$, while $\mathcal{K}_j$ are referred as multiplicative spaces that account for the different possibilities to obtain a spin-$j$ state. They are generally of a much larger dimension, cf.~Eq.~(\ref{eq:dimK}). 

Permutationally invariant states have a simpler form on this Hilbert space decomposition, namely
\begin{equation}
\label{eq:PI-state1}
\rho_{\rm PI} = \bigoplus_{j=j_{\rm min}}^{N/2} p_j \rho_j \otimes \frac{\mathbbm{1}}{\dim(\mathcal{K}_j)},
\end{equation}
with density operators $\rho_j$ called spin states and according probabilities $p_j$. Thus a permutationally invariant density operator only contains non-trivial parts on the spin Hilbert spaces while carrying no information on the multiplicative spaces. Note further that there are no coherences between different spin states. This means that any permutationally invariant state can be parsed into a block structure as schematically depicted in Fig.~\ref{fig:PIstate}.~The main diagonal is built up by unnormalized spin states $\tilde \rho_j=p_j \rho_j / \dim(\mathcal{K}_j)$, which appear several times, the number being equal to the dimension of the corresponding multiplicative space.~This block-decomposition represents a natural way to treat permutationally invariant states and has for example been employed already in the aforementioned related works of permutationally invariant tomography~\cite{dariano03a,karassiov05a,adamson07a,adamson08a} but also in other contexts~\cite{cirac99a,demkowicz05}.
\begin{figure}[ht]
\begin{center}
\includegraphics[scale=0.50]{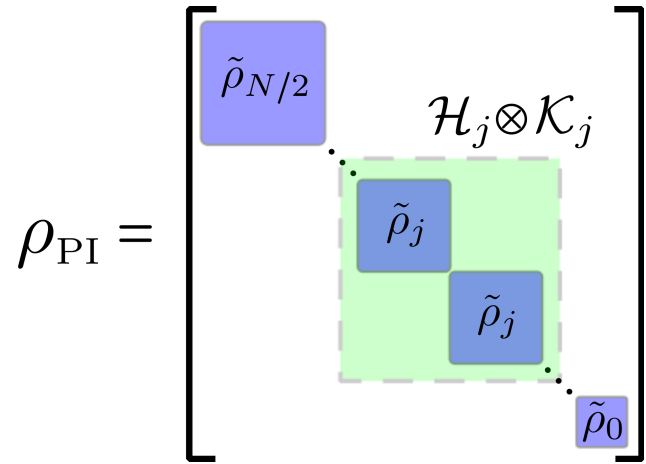}
\caption{Block decomposition for a generic permutationally invariant state as given by Eq.~(\ref{eq:PI-state1}) with $\tilde \rho_j=p_j \rho_j / \dim(\mathcal{K}_j)$.}
\label{fig:PIstate}
\end{center}
\end{figure}

This structure shows that if we work with permutationally invariant states we do not need to consider the full density operator but rather that it is sufficient to deal only with this ensemble of spin states. Therefore we identify from now on
\begin{equation}
\rho_{\rm PI} \Longleftrightarrow p_j \rho_j,\:j=j_{\rm min},j_{\rm min}+1,\dots, N/2.
\end{equation} 
This provides already an efficient way to store and to visualize such states. More importantly, it also enables an operational way to characterize valid states, since any permutationally invariant operator $\rho_{\rm PI}$ represents a true state if and only if all these spin operators $\rho_j$ are density operators and $p_j$ a probability distribution. This is in contrast to the generalized Bloch vector employed in the original proposal of permutationally invariant tomography~\cite{toth10a} given by Eq.~(\ref{eq:gen_bloch_vec}), which also provides efficient storage and processing of permutationally invariant states, but where Bloch vectors of physical states are not as straightforward to characterize.

Via this identification one can demonstrate once more the origin of the cubic scaling of the permutationally invariant tomography scheme. The largest spin state is of dimension $N+1$ which requires parameters on the order of $N^2$ for characterization. Since one has on the order of $N$ of these states this shows results in a cubic scaling.

Fixing the ensemble of spin states as parametrization it is now required to obtain an efficient procedure to compute the probabilities $p_k^a(\rho_{\rm PI})$ for the optimized measurement scheme. This is achieved as follows: First let us stress that a similar block decomposition as given by Eq.~(\ref{eq:PI-state1}) holds for all permutationally invariant operators. Hence also the measurements $M_k^a$ given by Eq.~(\ref{eq:PI-measurements}) can be cast into this form. Using the convention
\begin{equation}
\label{eq:PI-meas}
M_k^a = \bigoplus_{j=j_{\rm min}}^{N/2} M_{k,j}^a \otimes \mathbbm{1}
\end{equation}
leads to 
\begin{equation}
\label{eq:prob_PI}
\tr(\rho_{\rm PI} M_k^a) = \sum_{j=j_{\rm min}}^{N/2} p_j \tr(\rho_j M_{k,j}^a).
\end{equation}
Therefore the problem is shifted to the computation of the spin-$j$ operators $M_{k,j}^a$ for each setting $\hat a$. As we show in Prop.~\ref{prop:measurements} below, using the idea that measurements can be transformed into each other by a local operation $U_a\ket{i} = \ket{i}_a$ this provides the relation
\begin{equation}
M_{k,j}^a= W^a_j M_{k,j}^{e_3} W^{a,\dag}_j.
\end{equation}
Here $M_{k,j}^{e_3}$ corresponds to the measurement in the standard basis (that need to be computed once) and $W_j^a$ is a unitary transformation determined by the rotation $U_a$. This connection is given by
\begin{equation}
U_a = \exp(-i \sum_l t_l \sigma_l/2) \Longrightarrow W_j^a = \exp(-i \sum_l t_l S_{l,j})
\end{equation}
where $S_{l,j}$ stands for the spin operators in dimension $2j+1$. This finally provides the efficient way to compute probabilities.

\subsection{Optimization}

As a second step one still needs to cope with the optimization itself. Although there are different numerical routines for statistical state reconstruction like maximum likelihood \cite{hardil04a} or least squares~\cite{james01a,reimpell_thesis}, we prefer non-linear convex optimization \cite{cobook} to obtain the final solution. Quantum state reconstruction problems are known to be convex~\cite{reimpell_thesis,kosut04a}, but convex optimization has hardly been used for this task. However, convex optimization possesses several advantages: First of all it is a systematic approach that works for any convex fit function, including maximum likelihood and least squares. In contrast to other algorithms such as the 
fixed-point algorithm proposed in Ref.~\cite{hardil04a} it gives a precise stopping condition via an appropriate error control (see, however Ref.~\cite{glancy12a}) and still exploits all the favourable, convex, structure in comparison to re-parametrization ideas as in Ref.~\cite{james01a}. Moreover it is guaranteed to find the global optimum and the obtained accuracy is typically much higher than with other methods. 

Quantum state reconstruction as defined via Eq.~(\ref{eq:struct_srecon}) can be formulated as a convex optimization problem as follows: All fit functions listed in Tab.~\ref{tab:reconF} are convex on the set of states. Via a linear parametrization of the density operator $\rho(x)= \mathbbm{1}/\dim(\mathcal{H}) + \sum x_i B_i$, using an appropriate operator basis $B_i$ such that normalization is fulfilled directly, the required optimization problem becomes
\begin{eqnarray}
\label{eq:convex_opti}
\min_x && F[\rho(x)] \\ 
\nonumber
\textrm{s.t.}&& \rho(x) = \frac{\mathbbm{1}}{\dim(\mathcal{H})}+ \sum_i x_i B_i \geq 0,
\end{eqnarray}
with a convex objective function $F(x)=F[\rho(x)]$ and a linear matrix inequality as constraint, \ie, precisely the structure of a non-linear convex optimization problem~\cite{cobook}. For permutationally invariant states one uses $\rho(x)=\oplus_j \bar \rho_j(x)$ with $\bar \rho_j=p_j \rho_j$ by using an appropriate block-diagonal operator basis $B_i$; therefore we continue this discussion with the more general form.

The optimization given by Eq.~(\ref{eq:convex_opti}) can be performed for instance with the help of a barrier function~\cite{cobook}~\footnote{Let us stress that both least squares options can be parsed into a simpler convex problem, called semidefinite program, as for instance shown in Ref.~\cite{reimpell_thesis,langford07a}, but that this does not work with the true maximum likelihood function to our best knowledge.}. Rather than considering the constrained problem one solves the unconstrained convex task given by
\begin{equation}
\label{eq:uncon_opti}
\min_x F[\rho(x)] - t \log[\det\rho(x)],
\end{equation}
where the constraint is now directly included in the objective function. This so-called barrier term acts precisely as its name suggests: If one tries to leave the strictly feasible set, \ie, all parameters $x$ that satisfy $\rho(x) > 0$, one always reaches a point where at least one of the eigenvalues vanishes. Since the barrier term is large within this neighbourhood, in fact singular at the crossing, it penalizes points close to the border and thus ensures that one searches for an optimum well inside the region where the constraint is satisfied. The penalty parameter $t>0$ plays the role of a scaling factor. If it becomes very small the effect of the barrier term becomes negligible within the strictly feasible set and only remains at the border. Therefore a solution of Eq.~(\ref{eq:uncon_opti}) with a very small value of $t$ provides an excellent approximation to the real solution. As shown in Sec.~\ref{sec:copti_details} this statement can be made more precise by 
\begin{equation}
\label{eq:slackness}
F[\rho(x_{\rm sol}^t)] - F[\rho(x_{\rm sol})] \leq t \dim(\mathcal{H})
\end{equation}
which follows from convexity and which relates the true solution $x_{\rm sol}$ of the original problem given by Eq.~(\ref{eq:convex_opti}) to the solution $x_{\rm sol}^t$ of the unconstrained problem with penalty parameter $t$. This condition represents the above mentioned error control and serves as a stopping condition, \ie, as a quantitative error bound for a given $t$. Note that for permutationally invariant tomography $\dim(\mathcal{H})$ is not the dimension of the true $N$-qubit Hilbert space but instead the dimension of $\rho(x)=\oplus_j \bar \rho_j(x)$, \ie, $\sum_{j=j_{\rm min}} (2j+1)=(N+1)(N+2j_{\rm min}+1)/4$ which increases only quadratically.

Small penalty parameters are approached by an iterative process: For a given starting point $x_{\rm start}^{n}$ and a certain value of the parameter $t_n>0$ one solves Eq.~(\ref{eq:uncon_opti}). Its solution will be the starting point for $x_{\rm start}^{n+1}=x_{\rm sol}^{n}$ for the next unconstrained optimization with a lower penalty parameter $t_{n+1}< t_{n}$. As starting point for the first iteration we employ $t_0=1$ and the point $x_{\rm start}^{0}$ corresponding to the totally mixed state. This procedure is repeated until one has reached small enough penalty parameters. The penalty parameter is decreased step-wise. Then each unconstrained problem can be solved very efficiently since one starts already quite close to the true solution.

Let us point out that via the above mentioned barrier method one additionally obtains solutions to the hedged state reconstruction with $\beta=t$ since the unconstrained problem given by Eq.~(\ref{eq:uncon_opti}) is precisely the fit function of the hedged version of Tab.~\ref{tab:reconF}. 

Finally for comparative purposes we also like to mention the iterative fixed-point algorithm of Ref.~\cite{rehavcek01a}; for a modification see Ref.~\cite{rehacek07a}. It is designed for maximum likelihood estimation and is straightforward to implement since it only requires matrix multiplication, however, it has deficits regarding control and accuracy. For permutationally invariant tomography the algorithm can be adapted as follows: Given a valid iterate $\rho_{\rm PI}^n$ characterized by the ensemble of spin states $\bar \rho^n_j=p_j^n\rho_j^n$ one evaluates the probabilities $p_k^a(\rho_{\rm PI}^n)$ using Eq.~(\ref{eq:prob_PI}). Next, one computes the operators 
\begin{equation}
R^n_j = \sum_{a,k} \frac{f_k^a}{p_k^a(\rho_{\rm PI}^n)} M_{k,j}^a, 
\end{equation}
which determine the next iterate $\bar \rho^{n+1}_j= R^n_j \bar \rho_j^n R^{n\dag}_j/\mathcal{N}$ with $\mathcal{N}=\sum_j \tr(R^n_j \bar\rho_j^n R^{n\dag}_j)$. This iteration is started for example from the totally mixed state and repeated until a sufficiently good solution is obtained.

\section{Examples}\label{sec:examples}

The two methods from the previous section are employed in a prototype implementation under \textsc{MATLAB}, which already enables state reconstruction of about $20$ qubits on a standard desktop computer. 

The current algorithm is tested along the following lines: For a randomly generated permutationally invariant state $\rho_{\rm PI}^{\rm true}$ we compute the true probabilities $p_{k,\rm true}^a$ for the chosen measurement settings. Rather than sampling we set the observed frequencies equal to this distribution, \ie, $f_k^a=p_{k,\rm true}^a$. In this way linear inversion would return the original state, hence also each other reconstruction principle from Tab.~\ref{tab:reconF} has this state as solution. We now start the algorithm and compare the trace distance between the analytic solution $\rho_{\rm PI}^{\rm true}$ and the state after $n$ iterations $\rho_{\rm PI}^n$. This distance $\frac{1}{2}\tr|\rho_{\rm PI}^{\rm true} - \rho_{\rm PI}^n|$ quantifies the probability with which the two states, the true analytic solution and the iterate after $n$ steps in the algorithm, could be distinguished~\cite{nielsen_chuang}.

\begin{figure}[ht]
\begin{center}
\includegraphics[angle=-90,scale=0.51]{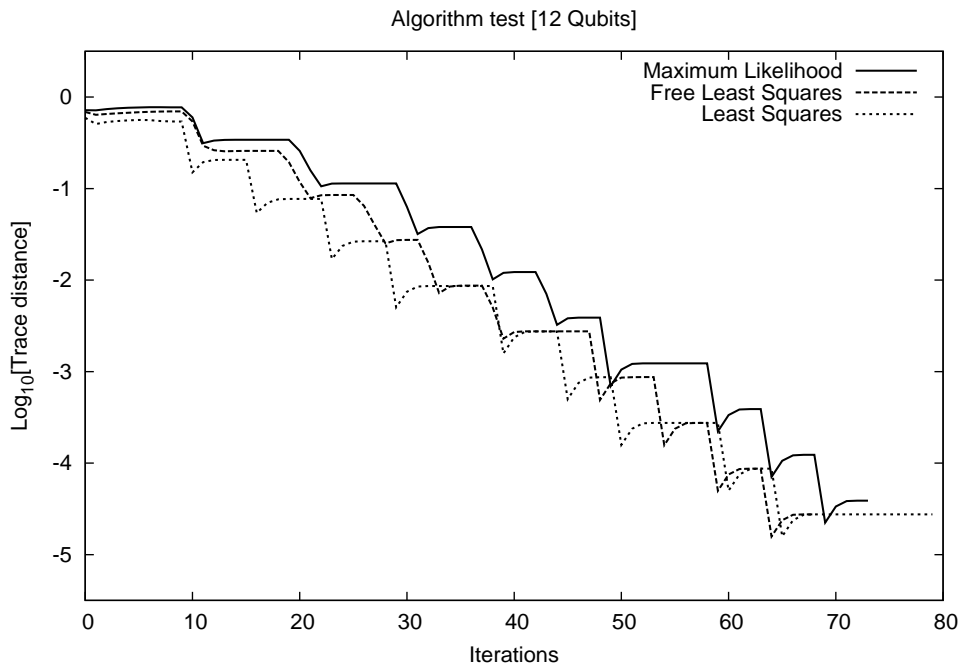}
\caption{Trace distance between the analytic solution and the estimate after $n$ algorithm steps for the three most common reconstruction principles from Tab.~\ref{tab:reconF}.}
\label{fig:example1}
\end{center}
\end{figure}
A typical representative of this process is depicted in Fig.~\ref{fig:example1} for $12$ qubits using optimized settings. The randomly generated state $\rho_{\rm PI}^{\rm true}$ was chosen to lie at the boundary of the state space since such rank-reduced solutions better resemble the case of state reconstruction of real data. More precisely, each spin state of the true density operator is given by a pure state $\rho^{\rm true}_j=\ket{\psi_j}\bra{\psi_j}$ chosen according to the Haar measure, while the $p_j$ are selected by the symmetric Dirichlet distribution with concentration measure $\alpha=1/2$~\cite{zyczkowski01}. As apparent the algorithm behaves similar for all three reconstruction principles and rapidly obtains a good solution after about $70$ iterations. The steps in this plot are points where the penalty parameter is reduced by a factor of $10$ starting from $t=1$ and decreased down to $t=10^{-10}$. The slight rise after these points comes from the fact that we plot the trace distance and not the actual function (fit-function plus penalty term) that is minimized. 

\begin{figure}[hb]
\begin{center}
\includegraphics[angle=-90,scale=0.51]{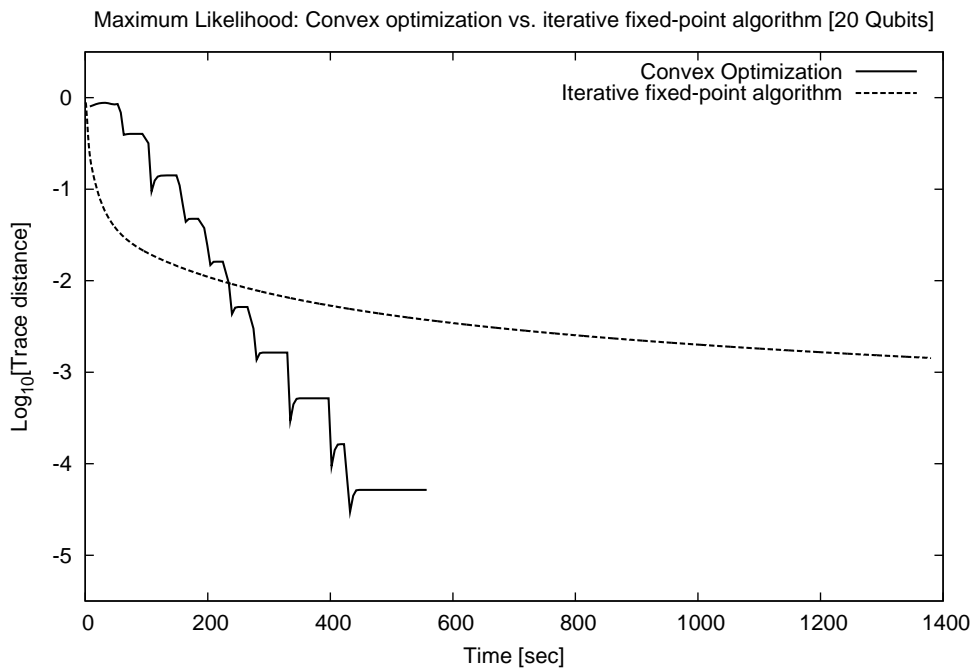}
\caption{Comparison of the maximum likelihood principle between convex optimization and the iterative fixed-point algorithm for the described testing procedure with respect to accuracy and algorithm time.}
\label{fig:example2}
\end{center}
\end{figure}
Figure~\ref{fig:example2} shows a similar comparison for the maximum likelihood reconstruction of $20$ qubits but now plotted versus algorithm time~\footnote{All simulations were performed on an Intel Core i5-650, 3.2 Ghz, 8 GB RAM using MATLAB 7.12.}. For comparison we include the performance of the iterative fixed-point algorithm, which requires much more iterations in general ($3000$ in this case vs.~about $90$ for convex optimization). Let us emphasise that a similar behaviour between these two algorithms appears also for smaller qubit numbers. As one can see, convex optimization delivers a faster and in particular more accurate solution. In contrast, the iterative fixed-point algorithm shows a bad convergence rate although it initially starts off better. This was one of the main reasons for us to switch to convex optimization.

The current algorithm times of this test are listed in Tab.~\ref{tab:current_performance} which are averaged over $50$ randomly generated states. Thus already this prototype implementation enables state reconstruction of larger qubit numbers in moderate times. The small time difference between reconstruction principles is because least squares as a quadratic fit function provides some advantages in the implementation. More details about this difference are given in Sec.~\ref{sec:details}.

\begin{table}
\begin{center}
\begin{tabular}{ccccc}
\hline \hline \hline
\T\B \phantom{Tomography protocol     }  & $N=8$ & $N=12$ & $N=16$ &$N=20$ \\
\hline \hline \hline
\T\B Maximum Likelihood  & & & & \\
\hline
\T\B Algorithm test  & $8.5 \sec$ & $47 \sec$ & $2.7 \min$  &$9.2 \min$\\
\T\B Simulated experiment & $9.2 \sec$ & $48 \sec$ & $2.9 \min$  &$9.3 \min$ \\
\hline \hline \hline
\T\B Least Squares  & & & & \\
\hline
\T\B Algorithm test  & $8.4 \sec$ & $39 \sec$ & $2.5 \min$  &$6 \min$\\
\T\B Simulated experiment & $9.2 \sec$ & $43 \sec$ & $2.7 \min$  &$6.7 \min$ \\
\hline \hline \hline
\end{tabular}
\caption{Current performance of the convex optimization algorithm on the described test procedure and on frequencies from simulated experiments; free least squares provides similar results to the maximum likelihood principle.}
\label{tab:current_performance}
\end{center}
\end{table}

Table~\ref{tab:current_performance} also contains the algorithm times for reconstructions using simulated frequencies $f_k^a=n_k^a/N_{\rm r}$. For each setting they are deduced from the count rates $n_k^a$ sampled from a multinomial distribution using the true event distribution $p_{k,\rm true}^a$ and $N_{\rm r}=1000$ repetitions. The true probabilities correspond to the same states as already employed in the algorithm test. From the table one observes that state reconstruction for data with count statistics requires only slightly more time than the algorithm test with the correct probabilities. We attribute this to the fact that a few more iterations are typically required in order to achieve the desired accuracy.  

Finally let us perform the reconstruction of a simulated experiment of $N=14$ qubits. Suppose that one intends to create a Dicke state $\ket{D_{k,N}}$ as given by Eq.~(\ref{eq:dicke_basis}), but that the preparation suffers from some imperfections such that at best one can prepare states of the form
\begin{equation}
\label{eq:aimed_state}
\rho_{\rm dicke} = \sum_{k=0}^{N} {N \choose k} p^k (1-p)^{N-k} \ket{D_{k,N}}\bra{D_{k,N}},
\end{equation}
where $p=0.6$ characterizes some asymmetry. As the true state prepared in the experiment we now model some further imperfection in the form of an additional small misalignment $U^{\otimes N}$, with $U=\exp(-i \theta \sigma_y/2)$, $\theta=0.2$, and some permutationally invariant noise $\sigma_{\rm PI}$ (chosen via the aforementioned method but using Hilbert-Schmidt instead of the Haar measure), \ie,
\begin{equation}
\label{eq:prepared_state}
\rho_{\rm true}=0.6 \:U^{\otimes N} \rho_{\rm dicke} U^{\dag \otimes N}  + 0.4\: \sigma_{\rm PI}.
\end{equation}
The frequencies are obtained via sampling from the state given by Eq.~(\ref{eq:prepared_state}) using intentionally only $N_{\rm r}=200$ repetitions per setting (to see some differences). Finally, we reconstruct the state according to the maximum likelihood principle. Figure~\ref{fig:example} shows the tomography bar plots of one of these examples for the largest spin state $p_j\rho_{j}$, $j=N/2=7$ for both states. Though this state might be artificial this example should highlight once more that this state reconstruction algorithm works also for realistic data and for qubit sizes where clearly any non-tailored state reconstruction scheme would fail. Moreover, it demonstrates that the spin ensemble $p_j\rho_j$ represents a very convenient graphical representation of such states (compared to a $2^{14} \times 2^{14}$ matrix in this case).
\begin{figure}[h]
\begin{center}
\includegraphics[scale=1.1]{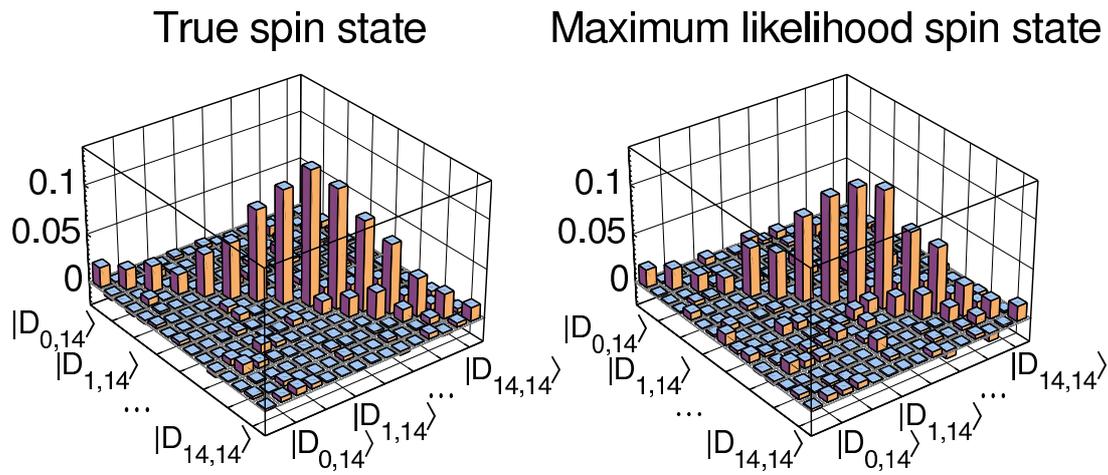}
\caption{The real part of the true and reconstructed (according to maximum likelihood) largest spin ensemble $p_j\rho_{j}$, $j=N/2=7$ using the optimal measurement setting. The basis is given by the Dicke basis $\ket{D_{k,14}}$, cf. Eq.~(\ref{eq:dicke_basis}).}
\label{fig:example}
\end{center}
\end{figure}

\section{Details}\label{sec:details}

\subsection{Reduction}\label{sec:details_reduction}

Let us first give more details regarding the reduction step. This starts by recalling a group theoretic result summarized in the next section, which is then used to show how the stated simplifications with respect to states and measurements are obtained.

\subsubsection{Background}\label{sec:schur-weyl}
Consider the following two unitary representations defined on the $N$ qubit Hilbert space: The permutation or symmetric group $V(p)$ which is defined by their action onto a standard tensor product basis by $V(p)\ket{i_1, \dots, i_N}=\ket{i_{p^{-1}(1)},\dots, i_{p^{-1}(N)}}$ according to the given permutation $p$, and the tensor product representation $W(U)=U^{\otimes N}$ of the special unitary group. A result known as the Schur-Weyl duality~\cite{simon,christandl06} states that the action of these two groups is dual, which means that the total Hilbert space can be divided into blocks on which the two representations commute. More precisely one has 
\begin{eqnarray}
(\mathbbm{C}^2)^{\otimes N} &=& \bigoplus_{j=j_{\rm min}}^{N/2} \mathcal{H}_j \otimes \mathcal{K}_j, \\
\label{eq:rep_symm}
V(p) &=& \bigoplus_{j=j_{\rm min}}^{N/2} \mathbbm{1} \otimes V_j(p), \\
\label{eq:rep_SU(2)}
W(U) &=& \bigoplus_{j=j_{\rm min}}^{N/2} W_j(U) \otimes \mathbbm{1}.
\end{eqnarray}
Here $V_j$ and $W_j$ are respective irreducible representations, and $j_{\rm min} \in \{0, 1/2\}$ depending on whether $N$ is even or odd. The dimensions of the appearing Hilbert spaces are $\dim(\mathcal{H}_j)=2j+1$ and 
\begin{equation}
\label{eq:dimK}
\dim(\mathcal{K}_j) = {N \choose N/2-j} - {N \choose N/2-j-1}
\end{equation}
for all $j<{N/2}$ and $\dim(\mathcal{K}_{N/2})=1$. Let us note that Eq.~(\ref{eq:rep_symm}) already ensures the block-diagonal structure of permutationally invariant operators, while Eq.~(\ref{eq:rep_SU(2)}) becomes important for the measurement computation.

A basis of the Hilbert space $\mathcal{H}_j \otimes \mathcal{K}_j$ is formed by the spin states $\ket{j,m,\alpha}=\ket{j,m} \otimes \ket{\alpha_j}$ with $m=-j,\dots,j$ and $\alpha_j=1,\dots,\dim(\mathcal{K}_j)$. These are obtained by starting with the states having the largest spin number $m=j$, which are given by
\begin{eqnarray}
\ket{j,j,1}&=&\ket{0}^{\otimes 2j} \otimes \ket{\psi^-}^{\otimes N-2j}, \\
\ket{j,j,\alpha}&=&\sum_p c_{j,p} V(p) \ket{j,j,1}, 
\end{eqnarray}
for all $\alpha \geq 2$. The coefficients $c_{j,p}$ must ensure that the states $\ket{j,j,\alpha}$ are orthogonal, otherwise their choice is completely free since the detailed structure of different $\alpha$'s is not important. The full basis is obtained by subsequently applying the ladder operator $J_-=\sum_{n=1}^N \sigma_{-;n}$ to decrease the spin number $m$. Here $\sigma_{-;n}$ refers to the operator with $\sigma_-=(\sigma_x-i \sigma_y)/2$ on the $n$-th qubit and identity on the rest. Thus in total the basis becomes 
\begin{equation}
\ket{j,m,\alpha} = \mathcal{N} J_-^{j-m} \ket{j,j,\alpha},
\end{equation}
with appropriate normalizations $\mathcal{N}$. Note that the subspace corresponding to the highest spin number $j=N/2$ is also called the symmetric subspace, which contains many important states like Greenberger-Horne-Zeilinger or Dicke states, which using the spin states read as
\begin{eqnarray}
\ket{\rm{GHZ}}\!\!&=&\!\frac{1}{\sqrt{2}}\! \left( \ket{0}^{\otimes N} + \ket{1}^{\otimes N}\right) \!=\!\frac{1}{\sqrt{2}}\! \left( \ket{N/2,N/2,1} + \ket{N/2,-N/2,1} \right)\!, \\
\label{eq:dicke_basis}
\ket{D_{k,N}}\!&=& \mathcal{N} \left[ \ket{1}^{\otimes k}\otimes \ket{0}^{\otimes N-k} \right]_{\rm PI}= \ket{N/2, N/2-k,1}.
\end{eqnarray}

\subsubsection{Permutationally invariant states and measurement operators}

Let us now employ this result in order to derive a generic form for permutationally invariant states; we give the proof for completeness.

\begin{proposition}[Permutationally invariant states]\label{prop:PIstate}
Any permutationally invariant state of $N$ qubits $\rho_{\rm PI}$ defined via Eq.~(\ref{eq:PI-state}) can be written as
\begin{equation}
\rho_{\rm PI} = \bigoplus_{j=j_{\rm min}}^{N/2} p_j \rho_j \otimes \frac{\mathbbm{1}}{\dim(\mathcal{K}_j)},
\end{equation}
hence it is fully characterized already by the ensemble $p_j \rho_j$. Moreover $\rho_{\rm PI}$ is a density operator if and only if all $\rho_j$ are density operators and $p_j$ a probability distribution. 
\end{proposition}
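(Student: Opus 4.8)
\medskip\noindent\textbf{Proof strategy.}
The plan is to read off the block structure directly from the Schur--Weyl duality recalled above, whose content here is that $V(p)=\bigoplus_{j}\mathbbm{1}_{\mathcal{H}_j}\otimes V_j(p)$ with each $V_j$ an irreducible representation of $S_N$, and with $V_j$ and $V_{j'}$ inequivalent whenever $j\neq j'$. First I would note that $\rho_{\rm PI}=[\rho]_{\rm PI}$ automatically commutes with the whole representation: using $V(q)V(p)=V(qp)$ and relabelling the summation index in Eq.~(\ref{eq:PI-state}) gives $V(q)\rho_{\rm PI}V(q)^\dag=\rho_{\rm PI}$ for every $q\in S_N$. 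Hence it suffices to describe the commutant of $\{V(p):p\in S_N\}$ and then impose positivity and unit trace.

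The key step is the commutant characterization via Schur's lemma. Writing $\Pi_j$ for the projector onto $\mathcal{H}_j\otimes\mathcal{K}_j$, an off-diagonal block $\Pi_j\rho_{\rm PI}\Pi_{j'}$ intertwines the $S_N$-action $\mathbbm{1}\otimes V_{j'}$ with $\mathbbm{1}\otimes V_j$ and therefore vanishes for $j\neq j'$ because $V_j\not\cong V_{j'}$; a diagonal block commutes with $\mathbbm{1}_{\mathcal{H}_j}\otimes V_j(p)$ for all $p$ and hence, by irreducibility of $V_j$, has the form $\Pi_j\rho_{\rm PI}\Pi_j=X_j\otimes\mathbbm{1}_{\mathcal{K}_j}$ for some operator $X_j$ on $\mathcal{H}_j$. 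Thus $\rho_{\rm PI}=\bigoplus_j X_j\otimes\mathbbm{1}_{\mathcal{K}_j}$. Setting $p_j:=\dim(\mathcal{K}_j)\,\tr(X_j)$, and $\rho_j:=X_j/\tr(X_j)$ whenever $p_j>0$ (with $\rho_j$ chosen arbitrarily when $p_j=0$), rewrites each block as $p_j\rho_j\otimes\mathbbm{1}/\dim(\mathcal{K}_j)$ and yields $\sum_j p_j=\sum_j\tr(X_j\otimes\mathbbm{1}_{\mathcal{K}_j})=\tr(\rho_{\rm PI})=1$, which establishes the claimed form.

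For the equivalence, the ``if'' direction is immediate: a direct sum of blocks $p_j\rho_j\otimes\mathbbm{1}/\dim(\mathcal{K}_j)$ with $p_j\ge0$ and $\rho_j\ge0$ is positive semidefinite, has trace $\sum_j p_j=1$, and commutes with every $V(p)$, hence coincides with its own permutationally invariant part. For ``only if'', positivity of $\rho_{\rm PI}$ forces positivity of each block $X_j\otimes\mathbbm{1}_{\mathcal{K}_j}$, and evaluating on product vectors $\ket{v}\otimes\ket{w}$ gives $X_j\ge0$, so $p_j\ge0$ and $\rho_j\ge0$; unit trace of $\rho_{\rm PI}$ then yields $\sum_j p_j=1$, while $\tr\rho_j=1$ holds by construction.

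I expect the only real friction to be the careful application of Schur's lemma to pin down the commutant (vanishing of cross terms between inequivalent isotypic components, and the $X_j\otimes\mathbbm{1}_{\mathcal{K}_j}$ form within a component), together with the harmless non-uniqueness of $\rho_j$ when $p_j=0$. Everything else is bookkeeping with traces and direct sums.
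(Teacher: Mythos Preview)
Your argument is correct and follows essentially the same route as the paper: both proofs hinge on the Schur--Weyl block form $V(p)=\bigoplus_j \mathbbm{1}\otimes V_j(p)$ together with Schur's lemma to kill the off-diagonal blocks and force the diagonal ones to be $X_j\otimes\mathbbm{1}$. The only cosmetic difference is that the paper computes the group average $\frac{1}{N!}\sum_p V_i(p)\,A\,V_j(p)^\dag$ directly (obtaining $\delta_{ij}\tr(A)\,\mathbbm{1}/\dim(\mathcal{K}_j)$ and then tensoring with the $\mathcal{H}$-factor), whereas you first observe $[V(q),\rho_{\rm PI}]=0$ and then characterise the commutant; the Schur-lemma content is identical.
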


\begin{proof}
The proposition follows using the representation $V(p)$ given by Eq.~(\ref{eq:rep_symm}) in the definition of the states Eq.~(\ref{eq:PI-state}) and then applying Schur's lemma~\cite{simon,christandl06}. This lemma states that any linear operator $A$ from $\mathcal{K}_j$ to $\mathcal{K}_i$ which commutes with all elements $p$ of the group $V_i(p) A = A V_j(p)$ must either be zero if $i$ and $j$ label different irreducible representations or $A= c \mathbbm{1}$ if they are unitarily equivalent. Since $A_{\rm PI}=1/N! \sum_p V_i(p) A V_j(p)^\dag$ fulfils this relation one obtains
\begin{equation}
\frac{1}{N!} \sum_p V_i(p) A V_j(p)^\dag = \delta_{ij} \: \tr(A) \frac{\mathbbm{1}}{\dim(\mathcal{K}_j)}.
\end{equation}
The normalization can be checked taking the trace on both sides. Adding appropriate identities provides
\begin{equation}
\label{eq:help3}
\frac{1}{N!} \sum_p \mathbbm{1} \otimes V_i(p) B \mathbbm{1} \otimes V_j(p)^\dag = \delta_{ij} \: \tr_{\mathcal{K}_j}( B ) \otimes \frac{\mathbbm{1}}{\dim(\mathcal{K}_j)}
\end{equation} 
where $B$ should now be a linear operator from $\mathcal{H}_j \otimes \mathcal{K}_j$ to $\mathcal{H}_i \otimes \mathcal{K}_i$. 

Finally let $P_j$ denote the projector onto $\mathcal{H}_j\otimes \mathcal{K}_j$ and using Eq.~(\ref{eq:help3}) delivers
\begin{eqnarray}
\rho_{\rm PI} &=& \frac{1}{N!}\sum_p V(p) \rho V(p)^\dag = \sum_{i,i^\prime,j,j^\prime} \frac{1}{N!}\sum_p P_i V(p)P_{i^\prime} \rho P_j V(p)^\dag P_{j^\prime} \\ 
&=& \sum_{i,j}  \left\{ \frac{1}{N!} \sum_p P_i[\mathbbm{1} \otimes V_i(p)] P_i \rho_{\rm PI} P_{j}[\mathbbm{1} \otimes V_{j}(p)]^\dag P_{j}\right\}\\
&=& \sum_{i,j} P_i \left[ \delta_{ij} \: \tr_{\mathcal{K}_j}(P_i \rho_{\rm PI} P_{j}) \otimes \frac{\mathbbm{1}}{\dim(\mathcal{K}_j)} \right] P_j \\
&=& \bigoplus_j \tr_{\mathcal{K}_j}(P_j \rho_{\rm PI} P_j) \otimes \frac{\mathbbm{1}}{\dim(\mathcal{K}_j)},
\end{eqnarray} 
which provides the general structure. 

The state characterization part follows because positivity of a block-matrix is equivalent to positivity of each block.
\end{proof}

Next let us concentrate on the measurement part. Though the block decomposition follows already from the previous proposition, it is here more important to obtain an efficient computation of each measurement block for the selected setting. 

\begin{proposition}[Measurement operators]\label{prop:measurements}
The POVM elements $M_k^a$ as defined in Eq.~(\ref{eq:PI-measurements}) for any local setting $\hat a \in \mathbbm{R}^3$ can be decomposed as $M_k^a=\bigoplus_j M_{k,j}^a \otimes \mathbbm{1}$ with  
\begin{equation}
M_{k,j}^a=W_j(U_a) M^{e_3}_{k,j} W_j(U_a)^\dag.
\end{equation} 
The unitary is given by $W_j(U_a)=\exp(-i \sum_l t_l S_{l,j})$ using the spin operators $S_{l,j}$ in dimension $2j+1$, while the coefficients $t_l$ are determined by $U_a = \exp(-i \sum_l t_l \sigma_l/2)$ which satisfies $\hat a \cdot \vec \sigma= U_a \sigma_z U_{a}^\dag$. For the measurement in the standard basis $\hat a=\hat{e}_3$ one gets
\begin{equation}
M^{e_3}_{k,j}= \ket{j,N/2-k}\bra{j,N/2-k}
\end{equation}
if $-j \leq N/2-k \leq j$ and zero otherwise.
\end{proposition}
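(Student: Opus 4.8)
The plan is to establish the three claims in turn: the block-diagonal form $M_k^a=\bigoplus_j M_{k,j}^a\otimes\mathbbm{1}$, the explicit value of the standard-basis block $M^{e_3}_{k,j}$, and then the unitary conjugation formula relating a general setting $\hat a$ to the standard one. The first claim is immediate: $M_k^a$ as defined in Eq.~(\ref{eq:PI-measurements}) is by construction of the form $[\,\cdot\,]_{\rm PI}$ up to a binomial factor, hence a permutationally invariant operator, so Proposition~\ref{prop:PIstate} (or rather its proof, which works verbatim for any permutationally invariant operator, not only states) gives the block-diagonal decomposition with trivial action on the multiplicity spaces $\mathcal{K}_j$.

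For the standard-basis block, I would start from the observation that $\ket{0}\!\bra{0}^{\otimes k}\otimes\ket{1}\!\bra{1}^{\otimes N-k}$ is the projector onto one particular computational basis vector with $k$ zeros and $N-k$ ones, and therefore $M_k^{e_3}={N\choose k}\,[\,\cdot\,]_{\rm PI}$ is (a multiple of) the projector onto the permutationally invariant subspace spanned by all such basis vectors — equivalently, the projector onto the joint eigenspace of $J_z=\sum_n\sigma_{z;n}/2$ with eigenvalue $\tfrac{1}{2}(k\cdot(-1)+(N-k)\cdot 1)=N/2-k$. The key point is that $J_z$ acts within each block as $S_{z,j}\otimes\mathbbm{1}$ (it commutes with all $V(p)$, and on $\mathcal{H}_j$ it is the spin-$z$ operator), so restricted to the $j$-block this eigenspace is exactly the one-dimensional span of $\ket{j,N/2-k}$ when $-j\le N/2-k\le j$ and empty otherwise. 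I would verify that the binomial normalization ${N\choose k}$ is precisely what makes the coefficient come out to $1$, most cleanly by comparing traces: $\tr(M_k^{e_3})={N\choose k}$ on one side, and $\sum_j\dim(\mathcal{K}_j)\tr(M_{k,j}^{e_3})$ on the other, using the well-known multiplicity identity that the number of basis vectors with a fixed $J_z$-eigenvalue equals $\sum_j\dim(\mathcal{K}_j)$ over the admissible $j$; this fixes the coefficient to $1$. This gives $M^{e_3}_{k,j}=\ket{j,N/2-k}\!\bra{j,N/2-k}$.

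For the general setting, the idea is that the single-qubit basis change $U_a\ket{i}=\ket{i}_a$ with $\hat a\cdot\vec\sigma=U_a\sigma_z U_a^\dag$ lifts to $W(U_a)=U_a^{\otimes N}$, and since $W(U_a)$ commutes with every $V(p)$ one has $M_k^a=W(U_a)\,M_k^{e_3}\,W(U_a)^\dag$ directly from Eq.~(\ref{eq:PI-measurements}) (the permutation sum and the tensor rotation pass through each other). Now invoke the Schur–Weyl form Eq.~(\ref{eq:rep_SU(2)}): $W(U_a)=\bigoplus_j W_j(U_a)\otimes\mathbbm{1}$, so conjugating the block-diagonal $M_k^{e_3}$ by it acts blockwise as $W_j(U_a)(\,\cdot\,)W_j(U_a)^\dag$ on the spin factor and trivially on $\mathcal{K}_j$, yielding $M_{k,j}^a=W_j(U_a)M_{k,j}^{e_3}W_j(U_a)^\dag$. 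Finally I would identify $W_j(U_a)$ concretely: writing $U_a=\exp(-i\sum_l t_l\sigma_l/2)$, the representation $W_j$ is generated by the image of the Lie algebra, and $\sigma_l/2$ is the spin-$1/2$ operator $S_{l,1/2}$, so $W_j(U_a)=\exp(-i\sum_l t_l S_{l,j})$ because $S_{l,j}$ are exactly the generators of the spin-$j$ irrep (this is just the statement that $J_l=\sum_n\sigma_{l;n}/2$ restricts to $S_{l,j}\otimes\mathbbm{1}$ on the $j$-block).

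**Main obstacle.** The conceptually substantive steps are all clean once Schur–Weyl is granted; the only place demanding care is pinning down the scalar normalization in $M^{e_3}_{k,j}=\ket{j,N/2-k}\!\bra{j,N/2-k}$ — i.e.\ checking that the coefficient is exactly $1$ and not some ratio of binomials — and making sure the $J_z$-eigenspace argument correctly handles the multiplicities (each admissible $j$ contributes $\dim(\mathcal{K}_j)$ copies of the one-dimensional eigenspace, and these must reassemble into the $[\,\cdot\,]_{\rm PI}$ image with the right weight). The trace-comparison identity $\sum_{j}\dim(\mathcal{K}_j)={N\choose N/2-m}$ (sum over $j\ge|m|$) is the linchpin; after that everything is bookkeeping. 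The claim that $W_j$ is generated by $S_{l,j}$ one may simply attribute to the standard theory of $SU(2)$ irreducible representations cited via \cite{simon,christandl06}.
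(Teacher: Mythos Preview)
Your proof is correct and tracks the paper's argument closely: the block structure via permutation invariance, the conjugation $M_k^a=W(U_a)M_k^{e_3}W(U_a)^\dag$ combined with the Schur--Weyl form of $W(U_a)$, and the Lie-algebra identification $W_j(U_a)=\exp(-i\sum_l t_l S_{l,j})$ are all exactly as in the paper. The one substantive difference is how you pin down the normalization of $M^{e_3}_{k,j}$. You propose a trace comparison resting on the weight-multiplicity identity $\sum_{j\ge|m|}\dim(\mathcal{K}_j)={N\choose N/2-m}$; the paper instead notes that once each $M^{e_3}_{k,j}$ is seen to be proportional to the rank-one projector $\ket{j,N/2-k}\bra{j,N/2-k}$ (by acting with $M_k^{e_3}$ on the explicit basis states $\ket{j,m,1}$ and counting zeros), the POVM completeness $\sum_k M_k^{e_3}=\mathbbm{1}$, hence $\sum_k M^{e_3}_{k,j}=\mathbbm{1}_j$, forces every proportionality constant to equal $1$. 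The completeness argument is slicker since it avoids the combinatorial identity altogether; your trace route is also valid but carries the extra burden you correctly flagged as the main obstacle.
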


\begin{proof}
The basic idea is to consider the measurement in an arbitrary local basis $\hat a$ by a rotation followed by the collective measurement in the standard basis. The block decomposition is obtained as follows
\begin{eqnarray}
M_k^a &=& U_a^{\otimes N} M_k^{e_3} U_a^{^\dag \otimes N} = W(U_a) \Big[ \bigoplus_j  M^{e_3}_{k,j} \otimes \mathbbm{1} \Big] W(U_a)^\dag  \\
 &=& \bigoplus_j W_j(U_a) M^{e_3}_{k,j} W_j(U_a)^\dag \otimes \mathbbm{1}.
\end{eqnarray}
The first step holds because $U_a$ satisfies $\ket{i}_a\!\bra{i}=U_a \ket{i}\bra{i} U_a^\dag$, while the block decomposition of the standard basis measurement $M_k^{e_3}$ is employed afterwards. In the last part one uses the tensor product representation given by Eq.~(\ref{eq:rep_SU(2)}). 

Since one knows that $W_j$ is irreducible it can be uniquely written in terms of its Lie algebra representation $dW_j$ as $W_j(U_a)=W_j(e^{-i X}) = e^{-i dW_j(X)}$, which is given by the spin operators in this case, \ie, $dW_j(\sigma_l/2) = S_{l,j}$~\cite{hall}.  

Thus it is left to compute the measurement blocks $M^{e_3}_{k,j}$ for the standard basis. Note it is sufficient to evaluate $M_{k,j} \otimes \ket{1_j}\bra{1_j}$ such that one can employ the spin basis states $\ket{j,m,1}$ as introduced in Sec.~\ref{sec:schur-weyl}. At first note that $M^{e_3}_{k,j}$ exactly contains $k$ projections onto $\ket{0}$, while each basis state $\ket{j,m,1}$ possesses $(N/2+m)$ zeros. Therefore one obtains $M^{e_3}_{k,j} \ket{j,m,1} \propto \delta_{k,N/2+m} \ket{j,m,1}$. Since each POVM has to resolve the identity this is only possible if each $M_{k,j}^{e_3}$ is the stated rank-$1$ projector on the basis states.
\end{proof}

Finally one still needs to express $U_a=\exp(-i \sum_l t_l \sigma_l /2)$ for the chosen setting $\hat a \in \mathbbm{R}^3$. Since this can be related to a familiar rotation~\cite{hall} these coefficients can be expressed as $t_l = (\theta \hat n)_l$ via a rotation about an angle $\theta$ around the axis $\hat n$. Since this rotation should turn $\hat e_3$ into $\hat a$ its parameters are given by 
\begin{eqnarray}
\hat n &=& \frac{\hat e_3 \times \hat a}{\| \hat e_3 \times \hat a \|_2}, \\
\theta &=& \arccos(\hat e_3 \cdot \hat a),
\end{eqnarray}
and $\hat n = \hat e_1$ (or any other orthogonal vector) if $\hat a = \pm \hat e_3$.

\subsection{Convex optimization}\label{sec:copti_details}

In this part we collect some more details regarding the described convex optimization algorithm; for a complete coverage we refer to the book~\cite{cobook}.

Each unconstrained optimization given by Eq.~(\ref{eq:uncon_opti}) is solved via a damped Newton algorithm. The minimization of $f(x)=F[\rho(x)] - t \log[\det\rho(x)]$ is obtained by an iterative process. In order to determine a search direction at a given iterate $x^n$ one minimizes the quadratic approximation 
\begin{equation}
f(x^n+\Delta x) \approx f(x^n) + \nabla f(x^n)^T \Delta x + \frac{1}{2} \Delta x^T \nabla^2 f(x^n) \Delta x.
\end{equation}
This reduces to solving a linear set of equation called the Newton equation
\begin{equation}
\nabla^2 f(x^n) \Delta x_{\rm nt} = - \nabla f(x^n),
\end{equation}
which determines the search direction $\Delta x_{\rm nt}$. The steplength $s$ for the next iterate $x^{n+1}= x^{n} + s \Delta x_{\rm nt}$ is chosen by a backtracking line search. Here one picks the largest $s=\max_{k \in \mathbbm{N}}\beta^k$ with $\beta \in (0,1)$ such that the iterate stays feasible $\rho(x^{n+1})>0$ and that the function value decreases sufficiently, \ie, $f(x^{n+1}) \leq f(x^n) + \alpha s \nabla f(x^n)^T \Delta x_{\rm nt}$ with $\alpha \in (0,0.5)$. The process is stopped if one has reached an appropriate solution, which can be identified by $\| \nabla f(x^n) \|_2 \leq \epsilon$. If the initial point $x_{\rm start}$ is already sufficiently close to the true solution then the whole algorithm converges quadratically, \ie, the precision gets doubled at each step.

At this point let us give the gradient and Hessian of the appearing functions. For the barrier term $\psi(x) = - \log [ \det \rho(x)]$ restricted to the positive domain $\rho(x)= \mathbbm{1}/\dim(\mathcal{H})+\sum_i x_i B_i > 0$ one gets~\cite{cobook}
\begin{eqnarray}
\label{eq:gradient_barrier}
\frac{\partial \psi(x)}{\partial x_i}  &=& - \tr[ \rho(x)^{-1} B_i], \\
\label{eq:hessian_barrier}
\frac{\partial^2 \psi(x)}{\partial x_j\partial x_i} &=& \tr[\rho(x)^{-1} B_j \rho(x)^{-1} B_i].
\end{eqnarray}
Equation~(\ref{eq:hessian_barrier}) shows that the Hessian of the penalty term $\nabla^2 \psi(x)>0$ is positive definite, such that $\psi(x)$ is indeed convex. The derivatives of the preferred fit function can be computed directly. For instance using the likelihood function $F_{\rm ml}(x)= - \sum_k f_k \log p_k(x)$ with $p_k(x)=\tr[\rho(x) M_k]$ they read
\begin{eqnarray}
\frac{\partial F_{\rm ml}(x)}{\partial x_i} &=& - \sum_k \frac{f_k}{p_k(x)} \tr(B_i M_k), \\
\label{eq:hessian_maxlik}
\frac{\partial^2 F_{\rm ml}(x)}{\partial x_j\partial x_i} &=&\sum_k \frac{f_k}{p^2_k(x)} \tr(B_j M_k ) \tr(B_i M_k ).
\end{eqnarray}

The bottleneck of such an algorithm is the actual computation of the second derivatives. Although the expansion coefficients of each measurement $\tr(B_j M_k)$ can be computed in advance, it is still necessary to compute Eq.~(\ref{eq:hessian_maxlik}) anew at each point $x$ due to the dependence of $p_k(x)$. With respect to that the least squares fit function bears a great advantage since its Hessian is constant, \ie, $\partial_j\partial_i F_{\rm ls}(x) = 2 \sum_k w_k \tr(B_j M_k ) \tr(B_i M_k )$, such that one saves time on this part. 

At last let us comment on the optimality conditions, known as the Karush-Kuhn-Tucker conditions~\cite{cobook}. A given $x^\star$ is the global solution of the convex problem given by Eq.~(\ref{eq:convex_opti}) if and only if~\footnote{Sufficiency holds under the Slater regularity condition that demands a strictly feasible point $\rho(x)>0$, which naturally holds for state reconstruction problems.} there exists an additional Lagrange multiplier $Z^\star$ such that the pair $(x^\star,\Lambda^\star)$ satisfies
\begin{eqnarray}
\label{eq:kkt_gradient}
\frac{\partial}{\partial x_i} F(x^\star)-\tr[\Lambda^\star B_i] &=& 0, \;\forall i,\\
\label{eq:kkt_feasibility}
\Lambda^\star \geq 0, \;\;\rho(x^\star) \geq 0, \\ 
\label{eq:kkt_gap}
\tr[\Lambda^\star \rho(x^\star)] = 0. 
\end{eqnarray}
Given the solution $x^t_{\rm sol}$ of the corresponding unconstrained problem with penalty parameter~$t$ it follows from $\nabla f(x^t_{\rm sol})=0$ using Eq.~(\ref{eq:gradient_barrier}) that the gradient conditions are satisfied with $\Lambda_t = t \rho(x^t_{\rm sol})^{-1}$ being the Lagrange multiplier. This pair $(x_{\rm sol}^t,\Lambda_t)$ also satisfies Eq.~(\ref{eq:kkt_feasibility}); only the duality gap condition $\tr[\Lambda_t \rho(x^t_{\rm sol})] = t \dim(\mathcal{H}) > 0$ does not hold exactly. However this quantity appears in the following inequality
\begin{eqnarray}
F(x^t_{\rm sol}) - \tr[ \Lambda_t \rho(x^t_{\rm sol})] &=& \min_{x: \rho(x) \geq 0} F(x) - \tr[\Lambda_t \rho(x)] \\ &\leq& \min_{x: \rho(x) \geq 0} F(x) = F(x_{\rm sol}).
\end{eqnarray} 
Here one used that $x_{\rm sol}^t$ is the solution of $F(x)-\tr[\Lambda_t\rho (x)]$ because the gradient vanishes (and the solution is not at the border), and  $\tr[\Lambda_t \rho(x)] \geq 0$ for the inequality. This is the stated accuracy given by Eq.~(\ref{eq:slackness}) which relates the function value of $x^t_{\rm sol}$ to the true solution $x_{\rm sol}$.

\subsection{Additional tools}

\subsubsection{Optimization of measurement settings}\label{sec:optimized_settings}

Measurement settings, each described by a unit vector $\hat{a}_i \in \mathbbm{R}^3$ as explained in Sec.~\ref{sec:recap_pitomo}, are chosen to optimize a figure of merit characterizing how well a given permutationally invariant target state $\rho_{\rm tar}$ can be reconstructed. This is motivated as follows: A permutationally invariant state $\rho_{\rm PI}$ is uniquely described by its generalized Bloch vector~\cite{toth10a} defined as
\begin{equation}
\label{eq:gen_bloch_vec}
b_{klmn}=\tr(\left[ \sigma_x^{\otimes k} \otimes \sigma_y^{\otimes l} \otimes \sigma_z^{\otimes m} \otimes \mathbbm{1}^{\otimes n} \right]_{\rm PI} \rho_{\rm PI}) 
\end{equation}
with natural numbers satisfying $k+l+m+n=N$. Consequently, one possible figure of merit is given by the total error of all Bloch vector elements, \ie, more precisely by
\begin{equation}
\mathcal{E}^2_{\rm total}(\hat a_i,\rho_{\rm tar})= \sum_{k,l,m,n} {N \choose k,l,m,n} \; \mathcal{E}^2_{b_{klmn}}(\hat a_i,\rho_{\rm tar}).
\end{equation}
Here the multinomial coefficient weights the error of each Bloch vector by its number of appearance in a generic Pauli product decomposition.

The error of each Bloch vector element must now be related to the performed measurements. For that note that each Bloch vector element can be expressed as
\begin{equation}
b_{klmn} = \sum_i c_i^{klmn} \tr( [(\hat{a}_i \cdot \vec \sigma)^{\otimes N-n} \otimes \mathbbm{1}^{\otimes n}]_{\rm PI} \rho_{\rm tar})
\end{equation}
using appropriate coefficients $c_i^{klmn}$ and the expectation values of $[(\hat{a}_i \cdot \vec \sigma)^{\otimes N-n} \otimes \mathbbm{1}^{\otimes n}]_{\rm PI}$ which can be computed from the coarse-grained measurement outcomes $M_k^{a_i}$ of setting $\hat{a}_i$ as given by Eq.~(\ref{eq:PI-measurements}) using linear combinations. Assuming independent errors one gets
\begin{eqnarray}
\mathcal{E}^2_{b_{klmn}}(\hat a_i,\rho_{\rm tar}) = \sum_i c_i^{klmn} \mathcal{E}^2_{\rho_{\rm tar}} \left( [(\hat{a}_i \cdot \vec \sigma)^{\otimes N-n} \otimes \mathbbm{1}^{\otimes n}]_{\rm PI} \right).
\end{eqnarray}
The detailed form of the error expression $\mathcal{E}^2_{\rho_{\rm tar}} \left( [(\hat{a}_i \cdot \vec \sigma)^{\otimes N-n} \otimes \mathbbm{1}^{\otimes n}]_{\rm PI} \right)$ may depend on the actual physical realization, but we assume the following form
\begin{equation}
\mathcal{E}^2_{\rho_{\rm tar}} \left( [(\hat{a}_i \cdot \vec \sigma)^{\otimes N-n} \otimes \mathbbm{1}^{\otimes n}]_{\rm PI} \right) =K \Delta_{\rho_{\rm tar}} \left( [(\hat{a}_i \cdot \vec \sigma)^{\otimes N-n} \otimes \mathbbm{1}^{\otimes n}]_{\rm PI} \right),
\end{equation}
where $\Delta_{\rho}[A]=\tr(\rho A^2)-[\tr(\rho A)]^2$ is the standard variance and $K$ a state-independent factor. This form fits for example well to the common error model in photonic experiments where count rates are assumed to follow a Poissonian distribution. For more details on this derivation we refer to Ref.~\cite{toth10a}.

For large qubit numbers $N$ this optimization is carried out iteratively. Starting from randomly chosen measurement directions or from vectors which are uniformly distributed according to some frame potential~\cite{gross07a}, one searches for updates according to 
\begin{equation}
\hat{a}_i^\prime=\frac{p\hat{a}_i^n+(1-p)\hat{r}_i}{\| p\hat{a}_i^n+(1-p)\hat{r}_i \|}.
\end{equation}
Here $\hat{a}_i^n$ is the current iterate, $p<1$ a probability close to $1$ and $\hat{r}_i$ are randomly chosen unit vectors. If this new set of directions $\hat{a}_i^\prime$ leads to a smaller total error $\mathcal{E}^2_{{\rm total}}(\hat{a}^\prime_i,\rho_{\rm tar})$ than the previous set then this new measurement settings form the next iterate $\hat{a}^{n+1}_i=\hat{a}_i^n$, otherwise this procedure is carried out once more. This process is repeated until the total error does not decrease any more. This way of optimizing the measurements requires a method to compute the total error $\mathcal{E}^2_{{\rm total}}(\hat{a}_i,\rho_{\rm tar})$ for a given set of measurements $\hat{a}_i$. Using the generalized Bloch vector or the spin ensemble this computation can be carried out again efficiently for larger qubit numbers $N$. 

Though this algorithm is not proven to attain the true global optimum it is still a straightforward technique to obtain good settings. In the end this is often sufficient, recalling that the true unknown state can deviate from the target state and that one considers ``just'' an overall single figure of merit. For our simulations we always use the optimized settings for the totally mixed state. 

Regarding this point we finally like to add that if one does not employ the minimal number of measurement settings, but rather an over-complete set, \eg, four times as much settings but four times fewer measurements per setting, then the procedure is quite insensitive to the chosen measurement directions. Hence, in many practical situations the search for optimal directions might not even be necessary and randomly chosen measurement directions suffice equally well.

\subsubsection{Statistical pretest}

Via the pretest one estimates the fidelity between the true $\rho_{\rm true}$ and the best permutationally invariant state $F_{\rm PI}(\rho_{\rm true})=\max_{\rho_{\rm PI} \geq 0} \tr(\sqrt{\sqrt{\rho_{\rm true}} \rho_{\rm PI}\sqrt{\rho_{\rm true}} })$ using only measurement results from a few settings $\hat a \in T$, \eg, employing only $T=\{\hat e_1,\hat e_2,\hat e_3\}$. Depending on this quantity one decides whether permutationally invariant tomography is worth to continue. As explained in detail in Ref.~\cite{toth10a} this fidelity can be bounded by
\begin{equation}
\label{eq:pretest}
F_{\rm PI}(\rho_{\rm true}) \geq [\tr(\rho_{\rm true} Z)]^2 
\end{equation}
with an operator $Z=\sum z_k^a M_k^a$ being built up by the performed measurements $M_k^a$ given by Eq.~(\ref{eq:PI-measurements}) and satisfying $Z \leq P_{\rm sym}$, where $P_{\rm sym}$ denotes the projector onto the symmetric subspace. 

The expansion coefficients $z_k^a$ should be optimized to attain the best lower bound. For a given target state $\rho_{\rm tar}$ this problem can be cast into a semidefinite program~\cite{cobook,toth09a} that can be solved efficiently using standard numerical routines. However for larger qubit numbers one must again employ the block structure of the measurement operators as given by Eq.~(\ref{eq:PI-meas}) to handle the operator inequality. Note that the projector on the symmetric subspace has a Block structure $P_{{\rm sym},j}=\delta_{j,N/2} \mathbbm{1}$. Then the final problem reads as
\begin{eqnarray}
\label{eq:sdp_pretest}
\max_z && \sum_{a \in T,k} z_k^a\tr(\rho_{\rm tar} M_k^a) \\ 
\nonumber
\textrm{s.t.}&& \sum_{a \in T,k} z_k^a M_{k,j=N/2}^a \leq \mathbbm{1}, \sum_{a \in T,k} z_k^a M_{k,j}^a \leq 0, \forall j < N/2.
\end{eqnarray}

If one experimentally implements this pretest one must account for additional statistical fluctuations. For the chosen $Z$ one can employ the sample mean $\bar Z = \sum z_k^a f_k^a$ using the observed frequencies $f_k^a=n_k^a/N_{\rm R}$ in $N_{\rm R}$ repetitions of setting $\hat a$, as an estimate of the true expectation value $\tr(\rho_{\rm true} Z)$. This sample mean $\bar Z$ will fluctuate around the true mean but large deviations will become less likely, such that for an appropriately chosen $\epsilon$ the quantity $\mathrm{sign}(\bar Z -\epsilon) (\bar Z - \epsilon)^2$ is a lower bound to the true fidelity at the desired confidence level. The proof essentially uses the techniques employed in Refs.~\cite{flammia11a,silva11a}.  

\begin{proposition}[Statistical pretest]
For any $Z=\sum z_k^a M_k^a \leq P_{\rm sym}$ let $\bar Z = \sum z_k^a n_k^a /N_{\rm R}$ denote the sample mean using $N_{\rm R}$ repetitions for setting $\hat a \in T$. If the data are generated by the state $\rho_{\rm true}$ then
\begin{equation}
\mathrm{Prob} \left[ F_{\rm PI}(\rho_{\rm true}) \geq \mathrm{sign}(\bar Z -\epsilon) (\bar Z - \epsilon) ^2 \right] \geq 1-\exp(-2 N_{\rm R} \epsilon^2 / C_z^2)
\end{equation} 
with $C_z^2 = \sum_a \left( z_{\rm max}^s - z_{\rm min}^s \right)^2$ where $z_{\rm max/min}^a$ are the respective optima for setting $\hat a$ over all outcomes $k$.
\end{proposition}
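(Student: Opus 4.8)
The plan is to combine the deterministic fidelity bound from Eq.~(\ref{eq:pretest}), which is valid for the \emph{true} expectation value $\tr(\rho_{\rm true}Z)$ whenever $Z\le P_{\rm sym}$, with a Hoeffding-type concentration inequality controlling how far the empirical estimator $\bar Z$ can deviate below $\tr(\rho_{\rm true}Z)$. First I would note that the estimator decomposes over settings as $\bar Z=\sum_{a\in T}\bar Z_a$ with $\bar Z_a=\sum_k z_k^a n_k^a/N_{\rm R}$, and that for each fixed setting $\hat a$ the counts $(n_k^a)_k$ come from $N_{\rm R}$ independent repetitions of that setting; so $\bar Z_a$ is an average of $N_{\rm R}$ i.i.d.\ random variables, each of which takes one of the values $\{z_k^a\}_k$ (depending on which outcome $k$ is registered) and therefore lies in the bounded interval $[z^a_{\min},z^a_{\max}]$. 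The full estimator $\bar Z$ is thus an average of $N_{\rm R}$ i.i.d.\ vector-valued draws summed over the (fixed, finite) set $T$, with each draw confined to a box of total squared width $C_z^2=\sum_{a}(z^a_{\max}-z^a_{\min})^2$.

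Next I would apply Hoeffding's inequality to the one-sided deviation: since $\bar Z$ is a mean of $N_{\rm R}$ independent terms and the sum of the squared ranges of a single term is $C_z^2$, one gets
\begin{equation}
\mathrm{Prob}\!\left[\tr(\rho_{\rm true}Z)-\bar Z \geq \epsilon\right]\leq \exp\!\left(-2N_{\rm R}\epsilon^2/C_z^2\right),
\end{equation}
or equivalently $\mathrm{Prob}[\bar Z-\epsilon\leq \tr(\rho_{\rm true}Z)]\geq 1-\exp(-2N_{\rm R}\epsilon^2/C_z^2)$. Conditioned on the event $\bar Z-\epsilon\leq \tr(\rho_{\rm true}Z)$, I would then argue that $\mathrm{sign}(\bar Z-\epsilon)(\bar Z-\epsilon)^2\leq [\tr(\rho_{\rm true}Z)]^2$: if $\bar Z-\epsilon\ge 0$ this is just squaring the inequality $0\le \bar Z-\epsilon\le \tr(\rho_{\rm true}Z)$, using that $\tr(\rho_{\rm true}Z)\ge 0$ (which holds because $0\le Z\le P_{\rm sym}$, so $Z$ is positive semidefinite); if $\bar Z-\epsilon<0$ the left-hand side is negative while $[\tr(\rho_{\rm true}Z)]^2\ge 0$, so the inequality is trivial. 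Chaining this with the deterministic bound $[\tr(\rho_{\rm true}Z)]^2\le F_{\rm PI}(\rho_{\rm true})$ of Eq.~(\ref{eq:pretest}) gives $\mathrm{sign}(\bar Z-\epsilon)(\bar Z-\epsilon)^2\le F_{\rm PI}(\rho_{\rm true})$ on that event, which is exactly the claimed statement once the probability bound is inserted.

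The only genuinely delicate point is the correct bookkeeping for the Hoeffding bound across the several settings in $T$: one must be careful that the relevant quantity is the sum over $a$ of the \emph{per-setting} ranges squared (one outcome is drawn per repetition per setting), not, say, a range over all outcomes of all settings at once, and that the normalization by $N_{\rm R}$ (assumed the same for every setting) enters correctly — this is precisely why the constant is $C_z^2=\sum_a(z^a_{\max}-z^a_{\min})^2$ and the exponent carries a single factor $N_{\rm R}$. Everything else is either a direct invocation of Eq.~(\ref{eq:pretest}) or the elementary case analysis on the sign of $\bar Z-\epsilon$; as the text already indicates, the argument parallels the estimator analyses of Refs.~\cite{flammia11a,silva11a}, so I would simply cite those for the concentration step and spell out the chaining with the fidelity bound.
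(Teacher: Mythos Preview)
Your argument is essentially the paper's own proof: Hoeffding across the settings to control $\mathrm{Prob}[\tr(\rho_{\rm true}Z)\le \bar Z-\epsilon]$, then the set-inclusion/chaining with Eq.~(\ref{eq:pretest}) via the sign case analysis. The only slip is your parenthetical justification ``$\tr(\rho_{\rm true}Z)\ge 0$ because $0\le Z\le P_{\rm sym}$'': the hypothesis is merely $Z\le P_{\rm sym}$, and in fact the optimized $Z$ from the SDP~(\ref{eq:sdp_pretest}) has nonpositive blocks for $j<N/2$, so $Z\ge 0$ is generally false. Fortunately you do not need it: on the event $\bar Z-\epsilon\le \tr(\rho_{\rm true}Z)$ and in the case $\bar Z-\epsilon\ge 0$, the inequality $\tr(\rho_{\rm true}Z)\ge 0$ follows immediately from $\tr(\rho_{\rm true}Z)\ge \bar Z-\epsilon\ge 0$, and the squaring step goes through. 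With that correction the proof is complete and matches the paper's.
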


\begin{proof}
The given statement follows along  
\begin{eqnarray}
\mathrm{Prob} &\left[ F_{\rm PI}(\rho_{\rm true}) \geq \mathrm{sign}(\bar Z -\epsilon) (\bar Z - \epsilon) ^2 \right] 
\geq \mathrm{Prob} \left[ \tr(\rho_{\rm true} Z) \geq \bar Z -\epsilon \right] \\ 
&\geq 1 - \mathrm{Prob} \left[ \tr(\rho_{\rm true} Z) \leq \bar Z -\epsilon \right] \geq 1 - \exp(-2 N_{\rm R} \epsilon^2 / C_z^2).
\end{eqnarray}
Here the first inequality holds because the set of outcomes satisfying $\{ n_k^a: \tr(\rho_{\rm true} Z) \geq \bar Z -\epsilon\}$ is a subset of $\{ n_k^a: F_{\rm PI}(\rho_{\rm true}) \geq \mathrm{sign}(\bar Z -\epsilon) (\bar Z - \epsilon) ^2\}$ using Eq.~(\ref{eq:pretest}). In the last inequality we use Hoeffdings tail inequality~\cite{hoeffding} to bound $\mathrm{Prob} \left[ \tr(\rho_{\rm true} Z) \leq \bar Z -\epsilon \right]$. 
\end{proof}

Note that this pretest can also be applied after the whole tomography scheme in which case the projector $P_{\rm sym}=\sum z_k^a M_k^a$ becomes accessible. Moreover, let us point out that a strong statistical significance, or a low $\epsilon$ respectively, might not be achieved with the best expectation value as given by Eq.~(\ref{eq:sdp_pretest})~\cite{jungnitsch10a}; hence optimizing $Z$ for a rather mixed state is often better. 

Finally let us remark that the pretest can be improved by additional projectors, see supplementary material of Ref.~\cite{toth10a}. This leads to the bound $F_{\rm PI}(\rho_{\rm true}) \geq \sum_j p_j^2$ with $p_j$ being the weight of the corresponding spin-$j$ state of the permutationally invariant part of $\rho_{\rm true}$ as given in Eq.~(\ref{eq:PI-state1}). From this expression one sees that this test only works well for states having a rather large weight on one of these spin states. Others, like the totally mixed state, while clearly being permutationally invariant, are not identified as states close to the permutationally invariant subspace. This is in stark contrast to compressed sensing where the certificate succeeds for the whole class of low rank target states~\cite{gross11a}. 

\subsubsection{Entanglement and MaxLik-MaxEnt principle}

Following the last comment from the previous section, we want to argue that even in the case of an inefficient certificate, permutationally invariant state reconstruction as given by Eq.~(\ref{eq:PI_staterecon}) is meaningful. At first we would like to emphasize that the permutationally invariant part of any density operator represents a fair representative to investigate the entanglement properties of the true, unknown state. This is because the transformation given Eq.~(\ref{eq:PI-state}) can be achieved by local operations and classical communications, whereby the amount of entanglement cannot increase. Thus if the permutationally invariant part of the density operator is entangled this holds true also for the real state. 

Second, permutationally invariant state reconstruction also represents the solution of the maximum-likelihood maximum-entropy principle as introduced in Ref.~\cite{teo11a}, which goes as follows: If the performed measurements are not tomographically complete then there is, in general, not a single state $\hat \rho_{\rm ml}$ that maximizes the likelihood but rather a complete set of them. In order to single out a ``good'' representative, Ref.~\cite{teo11a} proposes to choose the state which has the largest entropy, which, according to the Jaynes principle~\cite{jaynes}, is the special state for which one has the fewest information. 

\begin{proposition}[Permutationally invariant MaxLik-MaxEnt principle]\label{prop:PI&MlikMent}
Using the described permutationally invariant tomography scheme, the reconstructed permutationally invariant state given by Eq.~(\ref{eq:PI_staterecon}) (with the likelihood function) is also the solution of the maximum-likelihood maximum-entropy principle. 
\end{proposition}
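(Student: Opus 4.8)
The plan is to exploit that every measurement operator used in the scheme is itself permutationally invariant, so that the likelihood can only ``see'' the permutationally invariant part $[\rho]_{\rm PI}$ of a state, and then to combine this with the fact that the twirl in Eq.~(\ref{eq:PI-state}) never decreases the von Neumann entropy. The relevant feature of the setting is precisely that the coarse-grained measurements are \emph{not} tomographically complete on the full $N$-qubit space, so the set of likelihood maximizers is genuinely large and the MaxEnt tie-break is meaningful.

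First I would record the key observation. The POVM elements $M_k^a$ of Eq.~(\ref{eq:PI-measurements}) are permutationally invariant operators, i.e.\ $[M_k^a]_{\rm PI}=M_k^a$, and the twirl $\mathcal{T}\colon\rho\mapsto[\rho]_{\rm PI}$ is self-adjoint with respect to the Hilbert--Schmidt inner product (because $\{V(p)^\dag\}_{p}$ again runs over all of $S_N$). Hence $p_k^a(\rho)=\tr(\rho M_k^a)=\tr([\rho]_{\rm PI}M_k^a)=p_k^a([\rho]_{\rm PI})$ for every $\rho$, so the likelihood fit function obeys $F_{\rm ml}(\rho)=F_{\rm ml}([\rho]_{\rm PI})$. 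Since $\mathcal{T}$ maps the state space \emph{onto} the set of permutationally invariant states (it fixes every $\rho_{\rm PI}$), this gives $\min_{\rho\ge0}F_{\rm ml}(\rho)=\min_{\rho_{\rm PI}\ge0}F_{\rm ml}(\rho_{\rm PI})$; moreover $\rho$ is a global likelihood maximizer iff $[\rho]_{\rm PI}$ is, and a permutationally invariant state is a global likelihood maximizer exactly when it solves Eq.~(\ref{eq:PI_staterecon}) with $F=F_{\rm ml}$.

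Next I would bring in the entropy. Since $\mathcal{T}=\frac{1}{N!}\sum_{p}V(p)(\cdot)V(p)^\dag$ is a convex combination of unitary conjugations, concavity and unitary invariance of $S$ give $S([\rho]_{\rm PI})\ge\frac{1}{N!}\sum_{p}S(V(p)\rho V(p)^\dag)=S(\rho)$. Let $\mathcal{M}$ be the set of all likelihood maximizers: it is convex and compact, and $S$ is continuous and strictly concave, so there is a unique MaxLik--MaxEnt state $\rho^\star$. By the previous paragraph $[\rho^\star]_{\rm PI}\in\mathcal{M}$, and by the entropy inequality $S([\rho^\star]_{\rm PI})\ge S(\rho^\star)$; uniqueness of the maximizer then forces $[\rho^\star]_{\rm PI}=\rho^\star$, so $\rho^\star$ is permutationally invariant. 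Being permutationally invariant and a global likelihood maximizer, $\rho^\star$ then solves Eq.~(\ref{eq:PI_staterecon}) for the likelihood, which is the claim.

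The point I expect to require the most care is the uniqueness bookkeeping at the end: if $F_{\rm ml}$ has several maximizers among permutationally invariant states, then ``$\hat\rho_{\rm PI}$'' in Eq.~(\ref{eq:PI_staterecon}) is ambiguous, and the precise statement is that the MaxLik--MaxEnt state equals the entropy-maximal element of that solution set (which the argument above identifies automatically, since $\rho^\star$ maximizes entropy over all of $\mathcal{M}$, in particular over its permutationally invariant members); under the uniqueness assumption stressed throughout the paper this is simply $\hat\rho_{\rm PI}$. A routine check, worth stating, is that outcomes with $f_k^a=0$ cause no difficulty, as the corresponding terms drop out of $F_{\rm ml}$ and are irrelevant to every step.
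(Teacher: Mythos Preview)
Your argument is correct and takes a genuinely different route from the paper's. The paper proceeds via the Jaynes characterization: it notes that all likelihood maximizers share the expectation values $\tr(\hat\rho_{\rm PI}M_k^a)$, invokes that the MaxEnt state with prescribed expectation values has the exponential form $\rho\propto\exp(\sum_{a,k}\lambda_k^a M_k^a)$, and then uses the block decomposition $M_k^a=\bigoplus_j M_{k,j}^a\otimes\mathbbm{1}$ to show that this exponential inherits the permutationally invariant structure of Eq.~(\ref{eq:PI-state1}); tomographic completeness on the PI sector then pins it to $\hat\rho_{\rm PI}$. Your proof bypasses both the Lagrange-multiplier form and the block decomposition entirely: you only use that the twirl $\mathcal{T}$ is self-adjoint and fixes every $M_k^a$ (so $F_{\rm ml}\circ\mathcal{T}=F_{\rm ml}$), together with concavity of $S$ under a mixture of unitary conjugations, to force the MaxLik--MaxEnt state to be a fixed point of $\mathcal{T}$.

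What each approach buys: the paper's argument showcases the spin-block machinery developed in Sec.~\ref{sec:details_reduction} and connects to the standard Jaynes toolkit, but it implicitly relies on tomographic completeness of the $M_k^a$ on the PI sector and on the PI likelihood maximizer being unique. Your argument is more elementary and more portable---it would go through verbatim for any ``symmetry tomography'' scheme in which the measurements are invariant under a compact group $G$ and the target class is the $G$-invariant states---and your final paragraph handles the non-uniqueness caveat more explicitly than the paper does.
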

\begin{proof}
Since the measurements given by Eq.~(\ref{eq:PI-measurements}) are invariant and tomographically complete for permutationally invariant states, all density operators with the same spin ensemble as $\hat \rho_{\rm PI}$ have the same maximum likelihood. According to Ref.~\cite{jaynes} the state with maximal entropy and consistent with a given set of expectation values for operators $M^a_k$ has the form $\rho \propto \exp(\sum_{a,k} \lambda^a_k M^a_k)$. The Lagrange multipliers $\lambda^a_k \in \mathbbm{R}$ must be chosen such that the given expectation values match. However, because all $M^a_k$ are permutationally invariant we can employ the block decomposition given by Eq.~(\ref{eq:PI-meas}) and finally obtain $\exp(\sum_{a,k} \lambda^a_k M^a_k) = \exp(\oplus_j \sum_{a,k} \lambda^a_k M^a_{k,j}\otimes \mathbbm{1}) = \oplus_j \exp(\sum^a_k \lambda^a_k M^a_{k,j}) \otimes \mathbbm{1}$. Hence we obtain the same structure as $\hat \rho_{\rm PI}$, which therefore is also the state with maximum entropy. 
\end{proof}

\section{Conclusion and outline}\label{sec:conclusion}

In this manuscript we provided all necessary ingredients to carry out permutationally invariant tomography~\cite{toth10a} in experiments with large qubit numbers. This includes, besides scheme specific tasks like the statistical pretest and the optimization of the measurement settings, in particular the state reconstruction part.~Accounting for statistical fluctuations due to a finite amount of data, this reconstruction demands the solution of a non-linear large-scale optimization problem. We achieve this by first using a convenient toolbox to store, characterize and process permutationally invariant states, which largely reduces the dimension of the underlying problem and second by using convex optimization, which is superior compared to commonly used numerical routines in many respects. This makes permutationally invariant tomography a complete tomography method requiring only moderate measurement and data analysis effort.

There are many questions one may pursue in this direction: First, let us stress that the current prototype implementation is still not optimal. As explained, the bottleneck is the computation of the second derivatives, hence we strongly believe that Hessian free-optimization, like quasi-Newton or conjugate gradients \cite{geiger99a}, or the use of other barrier functions more tailored to linear matrix inequalities~\cite{pennon} are likely to push the reconstruction limit further. Second, it is natural to try to exploit other symmetries in the development of ``symmetry'' tomography protocols, \ie, tomography should work for all states that remain invariant under the action of a specific group. Clearly any symmetry decreases the number of state dependent parameters, but the challenge is to devise efficient local measurement strategies. Interesting classes here are graph-diagonal~\cite{hein05a} or, more general, locally maximally entangleable states~\cite{kruszynska09a}, translation or shift-invariant states~\cite{oconnor01a}, or $U^{\otimes N}$ invariant states~\cite{eggeling01a,cabello03a}. Third, it is worth to investigate to what extent particularly designed state tomography protocols are useful in further tasks, like process tomography for quantum gates. For instance, permutationally invariant tomography might be unable to resolve the Toffoli gate~\cite{monz09a} directly, but since the operation on all $N$ target qubits is symmetric, a permutationally invariant resolution of this subspace (and the additional control qubit) might be sufficient. Finally let us point out that permutationally invariant tomography can be further restricted to the symmetric subspace, which often contains the desired states. This is reasonable since we have seen that the pretest is only good if the unknown state has a large weight in one of the spin states. However since the symmetric subspace grows only linearly with the number of particles, this tomography scheme can analyse even much more qubits efficiently.

\ackn 
We thank M. Kleinmann, B.~Kraus, T.~Monz, P.~Schindler and J.~Wehr for stimulating discussions about the topic and technicalities. This work has been supported by the FWF (START prize Y376-N16), the EU (Project Q-ESSENCE and Marie Curie CIG 293993/ENFOQI), the BMBF (Chist-Era Project QUASAR), the ERC (Starting Grant GEDENTQOPT), the QCCC of the Elite Network of Bavaria, the Spanish MICINN (Project No. FIS2009-12773-C02-02), the Basque Government (Project No. IT4720-10), and the support of the National Research Fund of Hungary OTKA (Contract No. K83858).

\section*{References}

\bibliographystyle{iopart-num}

\begin{thebibliography}{10}
\expandafter\ifx\csname url\endcsname\relax
  \def\url#1{{\tt #1}}\fi
\expandafter\ifx\csname urlprefix\endcsname\relax\def\urlprefix{URL }\fi
\providecommand{\eprint}[2][]{\url{#2}}

\bibitem{qse_book}
Paris M~G~A and \v{R}eh\'{a}\v{c}ek J (eds) 2004 {\em Quantum state
  estimation\/} (Springer Berlin Heidelberg)

\bibitem{haeffner05a}
H\"affner H, H\"ansel W, Roos C~F, Benhelm J, Chek-al kar D, Chwalla M,
  K\"orber T, Rapol U~D, Riebe M, Schmidt P~O, Becher C, G\"uhne O, D\"ur W and
  Blatt R 2005 {\em Nature\/} {\bf 438} 643

\bibitem{kiesel07a}
Kiesel N, Schmid C, T{\'o}th G, Solano E and Weinfurter H 2007 {\em Phys. Rev.
  Lett.\/} {\bf 98} 063604

\bibitem{james01a}
James D~F~V, Kwiat P~G, Munro W~J and White A~G 2001 {\em Phys. Rev. A\/} {\bf
  64} 052312

\bibitem{renes04a}
Renes J~M, Blume-Kohout R, Scott A~J and Caves C~M 2004 {\em J. Math. Phys.\/}
  {\bf 45} 2171

\bibitem{monz11a}
Monz T, Schindler P, Barreiro J~T, Chwalla M, Nigg D, Coish W~A, Harlander M,
  Haensel W, Hennrich M and Blatt R 2011 {\em Phys. Rev. Lett.\/} {\bf 106}
  130506

\bibitem{yao11a}
Yao X~C, Wang T~X, Xu P, Lu H, Pan G~S, Bao X~H, Peng C~Z, Lu C~Y, Chen Y~A and
  Pan J~W Observation of eight-photon entanglement arXiv:1105.6318

\bibitem{huang11a}
Huang Y~F, Liu B~H, Peng L, Li Y~H, Li L, Li C~F and Guo G~C 2011 {\em Nat.
  Commun.\/} {\bf 2} 546

\bibitem{gross10a}
Gross D, Liu Y~K, Flammia S~T, Becker S and Eisert J 2010 {\em Phys. Rev.
  Lett.\/} {\bf 105} 150401

\bibitem{gross11a}
Gross D 2011 {\em IEEE Trans. on Information Theory,\/} {\bf 57} 1548

\bibitem{flammia12a}
Flammia S~T, Gross D, Liu Y~K and Eisert J Quantum tomography via compressed
  sensing: {E}rror bounds, sample complexity, and efficient estimators
  arXiv:1205.2300

\bibitem{cramer10a}
Cramer M, Plenio M~B, Flammia S~T, Gross D, Bartlett S~D, Somma R,
  Landon-Cardinal O, Liu Y~K and Poulin D 2010 {\em Nat. Commun.\/} {\bf 1} 9

\bibitem{landon_cardinal12a}
Landon-Cardinal O and Poulin D Practical learning method for multi-scale
  entangled states arXiv:1204.0792

\bibitem{yin11a}
Yin J~O~S and van Enk S~J 2011 {\em Phys. Rev. A\/} {\bf 83} 062110

\bibitem{toth10a}
T{\'o}th G, Wieczorek W, Gross D, Krischek R, Schwemmer C and Weinfurter H 2010
  {\em Phys. Rev. Lett.\/} {\bf 105} 250403

\bibitem{moroder12a}
Moroder T, Kleinmann M, Schindler P, Monz T, G\"uhne O and Blatt R Detection of
  systematic errors in quantum experiments arXiv:1204.3644
  
\bibitem{rosset12a}
Rosset D, Ferretti-Sch\"obitz R, Bancal J~D, Gisin N and Liang Y~C Imperfect
  measurements settings: Implications on quantum state tomography and
  entanglement witnesses arXiv:1203.0911  

\bibitem{dariano03a}
D'Ariano G~M, Maccone L and Paini M 2003 {\em J. Opt. B: Quantum Semiclass.
  Opt.\/} {\bf 5} 77

\bibitem{karassiov05a}
Karassiov V~P 2005 {\em J. Russ. Laser Res.\/} {\bf 26} 484

\bibitem{adamson07a}
Adamson R~B~A, Shalm L~K, Mitchell M~W and Steinberg A~M 2007 {\em Phys. Rev.
  Lett.\/} {\bf 98} 043601

\bibitem{adamson08a}
Adamson R~B~A, Turner P~S, Mitchell M~W and Steinberg A~M 2008 {\em Phys. Rev.
  A\/} {\bf 78} 033832

\bibitem{hradil97a}
Hradil Z 1997 {\em Phys. Rev. A\/} {\bf 55} R1561

\bibitem{langford07a}
Langford N~K 2007 {\em Encoding, manipulating and measuring quantum information
  in optics\/} Ph.D. thesis School of Physical Sciences, University of
  Queensland

\bibitem{blume10a}
Blume-Kohout R 2010 {\em Phys. Rev. Lett.\/} {\bf 105} 200504

\bibitem{bootstrapping}
Efron B and Tibshirani R~J 1994 {\em An introduction to the bootstrap\/}
  (Chapman \& Hall)

\bibitem{christandl11a}
Christandl M and Renner R Reliable quantum state tomography arXiv:1108.5329

\bibitem{blume-kohout12a}
Blume-Kohout R Robust error bars for quantum tomography arXiv:1202.5270

\bibitem{mood}
Mood A~F 1974 {\em Introduction to the theory of statistics\/} (McGraw-Hill
  Inc.)

\bibitem{cirac99a}
Cirac J~I, Ekert A~K and Macchiavello C 1999 {\em Phys. Rev. Lett.\/} {\bf 82}
  4344

\bibitem{demkowicz05}
Demkowicz-Dobrza\~nski R 2005 {\em Phys. Rev. A\/} {\bf 71} 062321

\bibitem{hardil04a}
Hradil Z, \v{R}eh\'{a}\v{c}ek J, Fiur\'{a}\v{s}ek J and Je\v{z}ek M 2004 {\em
  Lect. Notes Phys.\/} {\bf 649} 59

\bibitem{reimpell_thesis}
Reimpell M 2008 {\em Quantum Information and Convex Optimization\/} Ph.D.
  thesis Technische Universit\"at Braunschweig

\bibitem{cobook}
Boyd S and Vandenberghe S 2004 {\em Convex optimization\/} (Cambridge
  University Press)

\bibitem{kosut04a}
Kosut R~L, Walmsley I~A and Rabitz H Optimal experiment design for quantum
  state and process tomography and {H}amiltonian parameter estimation
  arXiv:quant-ph/0411093

\bibitem{glancy12a}
Glancy S, Knill E and Girard M Gradient-based stopping rules for
  maximum-likelihood quantum-state tomography arXiv:1205.4043

\bibitem{rehavcek01a}
\v{R}eh\'{a}\v{c}ek J, Hradil Z and Je\v{z}ek M 2001 {\em Phys. Rev. A\/} {\bf
  63} 040303(R)

\bibitem{rehacek07a}
\v{R}eh\'{a}\v{c}ek J, Hradil Z, Knill E and Lvovsky A~I 2007 {\em Phys. Rev.
  A\/} {\bf 75} 042108

\bibitem{nielsen_chuang}
Nielsen M~A and Chuang I~L 2000 {\em Quantum Computation and Quantum
  Information\/} (Cambridge University Press)

\bibitem{zyczkowski01}
\.{Z}yczkowski K and Sommers H~J 2001 {\em J. Phys. A\/} {\bf 34} 7111

\bibitem{simon}
Simon B 1996 {\em Representations of Finite and Compact Groups\/} vol 10 of
  Graduate studies in mathematics (American Mathematical Society,)

\bibitem{christandl06}
Christandl M The structure of bipartite quantum states - {I}nsights from group
  theory and cryptography arXiv.0604183

\bibitem{hall}
Hall B~C 2003 {\em Lie groups, Lie algebras, and representations: {A}n
  elementary introduction\/} (Springer-Verlag New York)

\bibitem{gross07a}
Gross D, Audenaert K and Eisert J 2007 {\em J. Math. Phys.\/} {\bf 48} 052104

\bibitem{toth09a}
T{\'o}th G, Wieczorek W, Krischek R, Kiesel N, Michelberger P and Weinfurter H
  2009 {\em New J. Phys.\/} {\bf 11} 083002

\bibitem{flammia11a}
Flammia S~T and Liu Y~K 2011 {\em Phys. Rev. Lett.\/} {\bf 106} 230501

\bibitem{silva11a}
da~Silva M~P, Landon-Cardinal O and Poulin D 2011 {\em Phys. Rev. Lett.\/} {\bf
  107} 210404

\bibitem{hoeffding}
Hoeffding W 1963 {\em J. Am. Stat. Assoc.\/} {\bf 58} 301

\bibitem{jungnitsch10a}
Jungnitsch B, Niekamp S, Kleinmann M, G\"uhne O, Lu H, Gao W~B, Chen Y~A, Chen
  Z~B and Pan J~W 2010 {\em Phys. Rev. Lett.\/} {\bf 104} 210401

\bibitem{teo11a}
Teo Y~S, Zhu H, B-G~Englert B~G, \v{R}eh\'{a}\v{c}ek J and Hradil Z Knowledge
  and ignorance in incomplete quantum state tomography arXiv:1102.2662

\bibitem{jaynes}
Jaynes E 1957 {\em Phys. Rev.\/} {\bf 108} 171

\bibitem{geiger99a}
Geiger C and Kanzow C 1999 {\em Numerische {V}erfahren zur {L}\"osung
  unrestringierter {O}ptimierungsaufgaben\/} (Springer)

\bibitem{pennon}
Ko\v{c}vara M and Stingl M 2003 {\em Optimization Methods and Software\/} {\bf
  18(3)} 317

\bibitem{hein05a}
Hein M, D\"ur W, Eisert J, Raussendorf R, Van~den Nest M and Briegel H~J 2005
  {\em Proceedings of the International School of Physics "Enrico Fermi" on
  "Quantum Computers, Algorithms and Chaos", Varenna, Italy\/}

\bibitem{kruszynska09a}
Kruszynska C and Kraus B 2009 {\em Phys. Rev. A\/} {\bf 79} 052304

\bibitem{oconnor01a}
O'Connor K~M and Wootters W~K 2001 {\em Phys. Rev. A\/} {\bf 63} 052302

\bibitem{eggeling01a}
Eggeling T and Werner R~F 2001 {\em Phys. Rev. A\/} {\bf 63} 042111

\bibitem{cabello03a}
Cabello A 2003 {\em J. Mod. Opt.\/} {\bf 50} 10049

\bibitem{monz09a}
Monz T, Kim K, H{\"a}nsel W, Riebe M, Villar A~S, Schindler P, Chwalla M,
  Hennrich M and Blatt R 2009 {\em Phys. Rev. Lett.\/} {\bf 102} 040501

\end{thebibliography}

\end{document}